\let\NAT@parse\undefined
\def\mc{\mathcal}
\begin{document}
\sloppy
\title{\LARGE \bf Making Nonlinear Systems Negative Imaginary via State Feedback}
\author{Kanghong Shi,$\quad$Ian R. Petersen, \IEEEmembership{Fellow, IEEE},$\quad$and$\quad$Igor G. Vladimirov % <-this % stops a space
\thanks{This work was supported by the Australian Research Council under grant DP190102158.}% <-this % stops a space
\thanks{K. Shi, I. R. Petersen and I. G. Vladimirov are with the School of Engineering, College of Engineering and Computer Science, Australian National University, Canberra, Acton, ACT 2601, Australia.
        {\tt kanghong.shi@anu.edu.au}, {\tt ian.petersen@anu.edu.au}, {\tt igor.vladimirov@anu.edu.au}.}%
}

\newtheorem{definition}{Definition}
\newtheorem{theorem}{Theorem}
\newtheorem{conjecture}{Conjecture}
\newtheorem{lemma}{Lemma}
\newtheorem{remark}{Remark}
\newtheorem{corollary}{Corollary}
\newtheorem{assumption}{Assumption}

\maketitle

\thispagestyle{empty}
\pagestyle{empty}

\begin{abstract}
This paper provides a state feedback stabilization approach for nonlinear systems of relative degree less than or equal to two by rendering them nonlinear negative imaginary (NI) systems. Conditions are provided under which a nonlinear system can be made a nonlinear NI system or a nonlinear output strictly negative imaginary (OSNI) system. Roughly speaking, an affine nonlinear system that has a normal form with relative degree less than or equal to two, after possible output transformation, can be rendered nonlinear NI and nonlinear OSNI. In addition, if the internal dynamics of the normal form are input-to-state stable, then there exists a state feedback input that stabilizes the system. This stabilization result is then extended to achieve stability for systems with a nonlinear NI uncertainty.
\end{abstract}

\begin{keywords}
nonlinear negative imaginary systems, nonlinear output strictly negative imaginary systems, state feedback stabilization, robust control.
\end{keywords}

%%%%%%%%%%%%%%%%%%%%%%% Sections %%%%%%%%%%%%%%%%%%%%%%%%%%%
\section{INTRODUCTION}
Negative imaginary (NI) systems theory, which was introduced by Lanzon and Petersen in \cite{lanzon2008stability} and \cite{petersen2010feedback}, provides an approach to the robust control of flexible structures \cite{preumont2018vibration,halim2001spatial,pota2002resonant}. As the commonly used negative velocity feedback control \cite{brogliato2007dissipative} may not be suitable for some highly resonant systems, NI systems theory provides an alternative approach that uses positive feedback control. An NI system can be regarded as a positive real (PR) system cascaded with an integrator. Typical mechanical NI systems are systems with colocated force actuators and position sensors. Roughly speaking, a square real-rational proper transfer matrix $F(s)$ is said to be NI if is stable and $j(F(j\omega)-F(j\omega)^*)\geq 0$ for all $\omega \geq 0$. The fundamental stability results in NI systems theory are intuitive yet useful. Under mild assumptions, the positive feedback interconnection of an NI system $F(s)$ and a strictly negative imaginary (SNI) system $G(s)$ is internally stable if and only if the DC loop gain has all its eigenvalues strictly less than unity; i.e., $\lambda_{max}(F(0)G(0))<1$. Since it was introduced in 2008 \cite{lanzon2008stability}, NI systems theory has attracted attention from many control theorists \cite{xiong2010negative,song2012negative,mabrok2014generalizing,wang2015robust,bhowmick2017lti} and has been applied in many fields including nano-positioning control \cite{mabrok2013spectral,das2014mimo,das2014resonant,das2015multivariable} and the control of lightly damped structures \cite{cai2010stability,rahman2015design,bhikkaji2011negative}, etc.

NI systems theory was extended to nonlinear systems in \cite{ghallab2018extending,shi2021robust,ghallab2022negative} through the notion of counterclockwise input-output dynamics \cite{angeli2006systems}. Roughly speaking, a system is said to be nonlinear NI if it has a positive definite storage function and is dissipative with respect to the supply rate $u^T\dot y$, where $u$ and $y$ are the input and output of the system, respectively. This definition is generalized in \cite{shi2020robustc}, to only require positive semidefiniteness of the storage function in order to allow for systems with poles at the origin; e.g., single and double integrators. Also introduced in \cite{shi2021robust} and \cite{shi2020robustc} is the notion of nonlinear output strictly negative imaginary (OSNI) systems (see \cite{bhowmick2017lti} and \cite{bhowmickoutput} for the definition of linear OSNI systems). Under the control of suitable nonlinear OSNI controllers, nonlinear NI systems can be asymptotically stabilized under mild assumptions. An advantage of linear and nonlinear NI systems theory is that NI systems can have relative degree of zero, one and two, in comparison to PR and passive systems whose relative degree can only be zero or one. With this advantage, NI systems theory can provide stability for systems of relative degree less than or equal to two, which cannot be dealt with by passivity or PR systems theory. One such example arises in the problem of state feedback stabilization.

Passivity and PR systems theory is applied in many papers to achieve state feedback stabilization \cite{kokotovic1989positive,saberi1990global,byrnes1991passivity,byrnes1991asymptotic,santosuosso1997passivity,lin1995feedback,jiang1996passification}. The general idea applied in these papers is to render part of a nonlinear system PR or passive using state feedback. Then stability can be obtained using the passivity or PR properties of the resulting system. Such state feedback passivity results are significant not only because they provide a generalization to the feedback linearization method, but also because feedback analysis design for passive systems is comparatively simple and intuitive \cite{byrnes1991passivity}. However, due to the nature of passive systems, a common assumption made in these papers is that the systems in question must have relative degree one. This restriction rules out a wide variety of systems which have output entries of relative degree two.

Since NI systems theory can deal with systems with relative degree zero, one and two, it is useful as a complement to passivity and PR systems theory in addressing the state feedback stabilization problem. In \cite{shi2021negative}, conditions are given for linear time-invariant (LTI) systems with relative degree one and relative degree two to be rendered NI or strongly strictly NI (SSNI) using state feedback control. This result is then generalized in the paper \cite{shi2021necessary}, which gives necessary and sufficient conditions under which an LTI system is state feedback equivalent to an NI, OSNI or SSNI system. In \cite{shi2021necessary}, the system is allowed to have mixed relative degree one and two. Stabilization results are also provided in \cite{shi2021negative} and \cite{shi2021necessary} for systems with SNI uncertainties.

Considering the nonlinear nature of most control systems, this paper investigates the problem of making affine nonlinear systems nonlinear NI using state feedback, in order to provide a method of stabilization for nonlinear systems of relative degree less than or equal to two. This paper provides conditions under which a system can be rendered nonlinear NI or OSNI, as well as corresponding formulas for the control inputs. Roughly speaking, if an affine nonlinear system of relative degree less than or equal to two can be transformed into a normal form (see \cite{isidori1995nonlinear,byrnes1991asymptotic} for a description of the normal form), then there exists state feedback control such that the resulting system is NI or OSNI. If in addition, the internal dynamics in the normal form are input-to-state stable (ISS), then there exists a state feedback control that stabilizes the system. Furthermore, such a system with a nonlinear NI plant uncertainty can also be stabilized if in addition there exists a storage function for this system, which is positive definite with respect to a specific subset of the state variables.

From the technical point of view, the contribution of this work is providing an alternative approach to the previous passivity-based state feedback stabilization results (e.g., \cite{byrnes1991passivity}) to overcome their limitations by allowing systems with output entries of relative degree two. More importantly, it broadens the class of systems to which nonlinear NI systems theory is applicable.

This paper is organized as follows: Section \ref{sec:pre} reviews the essential nonlinear NI systems definitions. Section \ref{sec:initial stability} provides conditions to render a system nonlinear NI or OSNI, locally and globally, as shown in Theorems \ref{theorem:render NI} and \ref{theorem:render globally NI}. Using the nonlinear NI properties, Theorems \ref{theorem:locally stabilization with 0 new input} and \ref{theorem:globally stabilization with 0 new input} provide conditions and formulas for state feedback stabilization, locally and globally. In Section \ref{sec:NI uncertainty}, Theorems \ref{theorem:stabilization} and \ref{theorem:global stabilization} provide conditions and formulas for the state feedback stabilization of a system with a nonlinear NI uncertainty. The process of stabilizing such an uncertain system is illustrated in a numerical example in Section \ref{sec:example}. A conclusion is given in Section \ref{sec:conclusion}.

Notation: The notation in this paper is standard. $\mathbb R$ denotes the fields of real numbers. $\mathbb R^{m\times n}$ denotes the space of real matrices of dimension $m\times n$. $A^T$ denotes the transpose of a matrix $A$. $A^{-T}$ denotes the transpose of the inverse of $A$; i.e., $A^{-T}=(A^{-1})^T=(A^T)^{-1}$. $\lambda_{max}(A)$ denotes the largest eigenvalue of a matrix $A$ with real spectrum. $\|\cdot\|$ denotes the standard Euclidean norm. $C^k$ represents the class of $k$-time continuously differentiable functions. Given a scalar function $h(x)$ and a vector field $f(x)$, $L_fh(x)$ denotes the Lie derivative of $h(x)$ with respect to $f(x)$; i.e., $L_fh(x):=\frac{\partial h(x)}{\partial x}f(x)$. For two vector fields $f$ and $g$ on $D\subset \mathbb R^n$, the Lie bracket $[f,g]$ is a third vector field defined by $$[f,g](x)=\frac{\partial g}{\partial x}f(x)-\frac{\partial f}{\partial x}g(x),$$ where $\frac{\partial g}{\partial x}$ and $\frac{\partial f}{\partial x}$ are Jacobian matrices. Repeated bracketing of $g$ with $f$ can be represented using the following adjoint representation for simplicity:
\begin{align}
ad_f^0g(x) = &\ g(x),\notag\\
ad_fg(x) = &\ [f,g](x),\notag\\
ad_f^kg(x) = &\ [f,ad_f^{k-1}g](x), \quad k\geq 1.\notag	
\end{align}

\section{PRELIMINARIES}\label{sec:pre}

Consider the following general affine nonlinear system:
\begin{subequations}\label{eq:general system}
\begin{align}
\Sigma: \quad   \dot x=&\ f(x)+g(x)u,\label{eq:general system state}\\
    y=&\ h(x),\label{eq:general system output}
\end{align}	
\end{subequations}
where $x\in \mathbb R^n$, $u\in \mathbb R^p$ and $y\in \mathbb R^p$ are the state, input and output of the system. The function $f:\mathbb R^n \to \mathbb R^n$ is globally Lipschitz, $g:\mathbb R^n \to \mathbb R^{n\times p}$ and $h:\mathbb R^n \to \mathbb R^p$. Here, $f,h$ and the columns $g^1,\cdots,g^p$ are of class $C^\infty$. We suppose that the vector field $f$ has at least one equilibrium. Then without loss of generality, we can assume $f(0)=0$ and $h(0)=0$ after a coordinate shift.
\begin{definition}[Nonlinear NI Systems]\label{def:nonlinear NI}\cite{shi2020robustc,ghallab2018extending}
The system (\ref{eq:general system}) is said to be a nonlinear negative imaginary (NI) system if there exists a positive semidefinite storage function $V:\mathbb R^n\to \mathbb R$ of class $C^1$ such that
\begin{equation}\label{eq:NI MIMO definition inequality}
    \dot V(x(t))\leq u(t)^T\dot y(t)
\end{equation}
for all $t\geq 0$.
\end{definition}

\begin{definition}[Nonlinear OSNI Systems]\label{def:nonlinear OSNI}
The system (\ref{eq:general system}) is said to be a nonlinear output strictly negative imaginary (OSNI) system if there exists a positive semidefinite storage function $V:\mathbb R^n\to\mathbb R$ of class $C^1$ and a scalar $\epsilon>0$ such that
\begin{equation}\label{eq:dissipativity of OSNI}
    \dot V(x(t))\leq u(t)^T\dot { y}(t) -\epsilon \left\|\dot {y}(t)\right\|^2
\end{equation}
for all $t\geq 0$. In this case, we also say that system (\ref{eq:general system}) is nonlinear OSNI with degree of output strictness $\epsilon$.
\end{definition}

\begin{definition}[Vector Relative Degree]\label{def:vector relative degree}\cite{isidori1995nonlinear}
A multivariable nonlinear system of the form (\ref{eq:general system}) has vector relative degree $\{r_1,\cdots,r_m\}$ at a point $x^\circ$ if\\
(i)
\begin{equation*}
L_gL_f^kh_i(x)=0	
\end{equation*}
for all $k<r_i-1$, for all $1\leq i\leq p$ and for all $x$ in a neighbourhood of $x^\circ$,\\
(ii) the $p\times p$ matrix
\begin{equation}\label{eq:A(x)}
A(x)=\left[\begin{matrix}L_gL_f^{r_1-1}h_1(x)\\ \vdots \\ L_gL_f^{r_p-1}h_p(x) \end{matrix}\right]
\end{equation}
is nonsingular at $x=x^\circ$. Here $h_i(x)$ denotes the $i$-th entry of the vector $h(x)\in \mathbb R^p$.
\end{definition}

\begin{definition}[Uniform Relative Degree]
The system (\ref{eq:general system}) is said to have uniform relative degree $\{r_1,\cdots,r_p\}$ if it has vector relative degree $\{r_1,\cdots,r_p\}$ at all $x\in R^n$.
\end{definition}

\begin{definition}[Class $\mc{K}$ and $\mc{K}_{\infty}$ Functions]\cite{khalil2002nonlinear}
A continuous function $\alpha:[0,a)\to [0,\infty)$ is said to belong to class $\mc{K}$ if it is strictly increasing and $\alpha(0)=0$. It is said to belong	to class $\mc{K}_{\infty}$ if $a=\infty$ and $\alpha(r)\to \infty$ as $r\to \infty$.
\end{definition}

\begin{definition}[Class $\mc{KL}$ Functions]\cite{khalil2002nonlinear}
A continuous function $\beta:[0,a)\times [0,\infty)\to [0,\infty)$ is said to belong to class $\mc{KL}$ if for each fixed $s$, the mapping $\beta(r,s)$ belongs to class $\mc K$ with respect to $r$ and for each fixed $r$, the mapping $\beta(r,s)$ is decreasing with respect to $s$ and $\beta(r,s)\to 0$ as $s\to \infty$.
\end{definition}

\section{STABILIZATION OF A SYSTEM USING NONLINEAR NI SYSTEMS THEORY}\label{sec:initial stability}
Let us consider a nonlinear system of the form (\ref{eq:general system}). We aim to stabilize this system by rendering it a nonlinear OSNI system as per Definition \ref{def:nonlinear OSNI} in the case that the system has relative degree less than or equal to two. We now provide a formal definition of systems of relative degree less than or equal to two.
\begin{definition}
	A system of the form (\ref{eq:general system}) is said to have relative degree less than or equal to two if it has a vector relative degree $r=\{r_1,\cdots,r_p\}$, where $1\leq r_i \leq 2$ for all $i=1,\cdots,p$. Without loss of generality, assume the components of the output vector are sorted such that the components in the vector relative degree are in nondecreasing order; i.e., $r_i = 1$ for $i=1,2,\cdots,p_1$ and $r_i = 2$ for $i=p_1+1,p_1+2,\cdots,p$, where $p_1$ is the number of ones in the vector relative degree $r$.
\end{definition}

The paper \cite{byrnes1991asymptotic} provides conditions for a system with a vector relative degree to have local and global normal forms. Here, we focus on the specific case that the system has relative degree less than or equal to two.
\begin{lemma}\label{lemma:normal form conditions}(see also \cite{byrnes1991asymptotic})
Suppose the system (\ref{eq:general system}) has relative degree less than or equal to two at $x=0$. If the distribution
\begin{equation*}\label{eq:G}
G = span\{g^1,g^2,\cdots,g^p\}	
\end{equation*}
is involutive, then the system (\ref{eq:general system}) can be described locally around $x=0$ by the following normal form
\begin{subequations}\label{eq:normal form}
\begin{align}
\Sigma:\quad\dot z =&\ f^*(z,\xi),\label{eq:normal form a}\\
\dot \xi_1 = &\ a_1(z,\xi)+b_1(z,\xi)	u,\\
\dot \xi_2 = &\ \xi_3,\\
\dot \xi_3 =&\ a_2(z,\xi)+b_2(z,\xi)u,\\
y =& \left[\begin{matrix}\xi_1\\ \xi_2\end{matrix}
\right]
\end{align}	
\end{subequations}
where $u$ and $y$ are still the input and output of the system. The vector $[z^T\ \xi^T]^T$ is the new state of the system, where $z\in \mathbb R^m$ ($m\geq 0$) and $\xi = [\xi_1^T\ \xi_2^T\ \xi_3^T]^T$. The vector $\xi_1 \in \mathbb R^{p_1}$ contains the state vector entries corresponding to the ones in the vector relative degree. The vector $\xi_2\in \mathbb R^{p_2}$ ($p_2:=p-p_1$) contains the state vector entries corresponding to the twos in the vector relative degree and $\xi_3\in \mathbb R^{p_2}$ is defined as the derivative of $\xi_2$. Here, $f^*,a_1,b_1,a_2,b_2$ are functions of suitable dimensions. Also,
\begin{equation*}
a_1(z,\xi)=	\left[\begin{matrix}
L_fh_1(x)\\ \vdots \\ L_fh_{p_1}(x)	
\end{matrix}
\right], \quad a_2(z,\xi)=	\left[\begin{matrix}
L_f^2h_{p_1+1}(x)\\ \vdots \\ L_f^2h_p(x)	
\end{matrix}
\right],
\end{equation*}
and
\begin{equation*}
	b_1(z,\xi) = \left[\begin{matrix}
L_gh_1(x)\\ \vdots \\ L_gh_{p_1}(x)	
\end{matrix}
\right], \quad b_2(z,\xi) = \left[\begin{matrix}
L_gL_fh_{p_1+1}(x)\\ \vdots \\ L_gL_fh_p(x)
\end{matrix}
\right].
\end{equation*}
Hence,
\begin{equation*}
\left[\begin{matrix}b_1(z,\xi)\\ b_2(z,\xi)\end{matrix}
\right]=A(x)	
\end{equation*}
as in (\ref{eq:A(x)}) and is nonsingular for $(z,\xi)$ at $(0,0)$.
\end{lemma}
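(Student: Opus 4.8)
The plan is to follow the standard normal-form construction of Byrnes and Isidori \cite{byrnes1991asymptotic,isidori1995nonlinear}, specialized to the case $1\le r_i\le 2$, and to verify that the hypotheses furnish a valid local change of coordinates with the stated structure. First I would introduce the output-based coordinates. For each output entry of relative degree one, $i=1,\dots,p_1$, take the corresponding component of $\xi_1$ to be $h_i(x)$. For each entry of relative degree two, $i=p_1+1,\dots,p$, take the corresponding components of $\xi_2$ and $\xi_3$ to be $h_i(x)$ and $L_fh_i(x)$ respectively. This produces $p_1+2p_2=2p-p_1$ functions. The relative degree conditions of Definition \ref{def:vector relative degree}, in particular the nonsingularity of $A(x)$ at the origin, imply by the standard argument in \cite{isidori1995nonlinear} that the differentials of these $2p-p_1$ functions are linearly independent at $x=0$.

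Next I would exploit the involutivity hypothesis. Since $A(x)$ is nonsingular at the origin, the columns $g^1,\dots,g^p$ are linearly independent there, so $G$ has constant dimension $p$ near $x=0$. Because $G$ is involutive, the Frobenius theorem supplies $n-p$ scalar functions with linearly independent differentials that annihilate $G$; i.e., $L_{g^j}\lambda_k=0$ for all $j,k$. I would then select $m=n-2p+p_1$ of these functions as the components of $z$, chosen so that together with the $\xi$-coordinates they form $n$ functions whose differentials are independent at the origin. By the inverse function theorem this defines a local diffeomorphism, hence a legitimate coordinate system around $x=0$.

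Finally I would compute the dynamics in the new coordinates. Differentiating $\xi_1=(h_1,\dots,h_{p_1})^T$ along (\ref{eq:general system}) gives $\dot\xi_1=a_1(z,\xi)+b_1(z,\xi)u$ with $a_1,b_1$ as stated. For each relative-degree-two entry, $L_gh_i=0$ (which holds since $r_i=2$) yields $\dot\xi_2=\xi_3$, and differentiating once more gives $\dot\xi_3=a_2(z,\xi)+b_2(z,\xi)u$ with $a_2,b_2$ as stated. Since $L_gz=0$ by construction, $\dot z=L_fz=:f^*(z,\xi)$ carries no input term. Stacking $b_1$ over $b_2$ reproduces exactly the matrix $A(x)$ of (\ref{eq:A(x)}), nonsingular at the origin. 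I expect the main obstacle to be the second step: verifying that the $\xi$-differentials and the chosen $z$-differentials are jointly independent at $x=0$, so that the transformation is genuinely a diffeomorphism. This is precisely where involutivity is essential, since without it the Frobenius theorem would not provide the $G$-annihilating functions and the $z$-dynamics could not be decoupled from $u$.
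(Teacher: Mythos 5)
Your construction is correct, but it is not what the paper does: the paper's entire proof of this lemma is a citation, observing that the statement is the special case $r_i\in\{1,2\}$ of Propositions 3.2a and 3.2b in \cite{byrnes1991asymptotic}. What you have written out is, in effect, the proof of those propositions specialized to relative degree at most two: output-derived coordinates $\xi$, Frobenius functions annihilating $G$ as the $z$-coordinates, and the Lie-derivative computation showing that $a_1,b_1,a_2,b_2,f^*$ have the stated form with $b_1$ stacked over $b_2$ equal to $A(x)$ of (\ref{eq:A(x)}). Your route buys self-containedness and makes visible exactly where each hypothesis enters (nonsingularity of $A(0)$ gives both the independence of the $2p-p_1$ output-derived differentials and $\dim G=p$ near the origin; involutivity gives the annihilating functions and hence input-free internal dynamics), while the paper's route buys two lines of text at the cost of sending the reader to the reference.

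The one place where your sketch asserts rather than proves is the step you yourself flag as the main obstacle: that $m=n-2p+p_1$ of the Frobenius functions can be chosen so that their differentials, together with those of the $\xi$-coordinates, are independent at $x=0$. This is fillable, but the mechanism is not involutivity alone; it also uses condition (i) of Definition \ref{def:vector relative degree}. Since $L_gh_i=0$ for the relative-degree-two entries, the $p_2$ differentials $dh_i$, $i>p_1$, lie in the annihilator $G^{\perp}$, which has dimension $n-p$. On the other hand, the span of the remaining $p$ differentials ($dh_i$ for $i\le p_1$ and $dL_fh_i$ for $i>p_1$) meets $G^{\perp}$ only at zero: if a combination with coefficient vector $c$ annihilates $G$, then $c^TA(0)=0$, hence $c=0$ by nonsingularity of $A(0)$. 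Thus the cotangent space at the origin splits as $G^{\perp}$ plus the span of those $p$ differentials, and since the differentials of the $n-p$ Frobenius functions span $G^{\perp}$, the exchange lemma lets you complete $\{dh_i\}_{i>p_1}$ to a basis of $G^{\perp}$ using exactly $(n-p)-p_2=n-2p+p_1=m$ of them. With that inserted, your argument is a complete and correct proof of the lemma, and it also explains the dimension count $m$ that the lemma states without comment.
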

\begin{proof}
The proof directly follows from Propositions 3.2a and 3.2b in \cite{byrnes1991asymptotic} as this lemma is a special case in which the vector relative degree only contains the numbers one and two. Note that there are differences between the notation used here and used in \cite{byrnes1991asymptotic}.
\end{proof}

In the normal form (\ref{eq:normal form}), the dynamics described by (\ref{eq:normal form a}) are called the \emph{internal dynamics} of the system. When the output is identically zero, the internal dynamics are called the \emph{zero dynamics}\cite{byrnes1991asymptotic,khalil2002nonlinear,isidori1995nonlinear}. In the case of system (\ref{eq:normal form}), $y$ being identically zero implies $\xi=0$. Therefore, the zero dynamics are described by
\begin{equation*}\label{eq:zero dynamics}
\dot z = f^*(z,0).	
\end{equation*}

\begin{lemma}(see \cite{byrnes1991asymptotic})\label{lemma:global normal form}
The system (\ref{eq:general system}) is globally diffeomorphic to a system having the normal form (\ref{eq:normal form}) if:\\
\textbf{H1}: the system has uniform relative degree less than or equal to two;\\
\textbf{H2}: the vector fields
\begin{equation*}
X_i^k = ad_{\tilde f}^{k-1}\tilde g_i, \quad 1\leq i\leq p, \quad 1\leq k\leq r_i
\end{equation*}
are complete;\\
\textbf{H3}: $[X_i^1, X_j^1]=0$ for all $1\leq i,j\leq p$.\\
Here,
\begin{equation*}
\tilde f = f-gA^{-1}(x)\left[\begin{matrix}L_f^{r_1}h_1(x)\\ \vdots \\ L_f^{r_p}h_p(x)
\end{matrix}
\right], \quad \tilde g=gA^{-1}(x).
\end{equation*}
\end{lemma}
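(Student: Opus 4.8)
The plan is to construct a global change of coordinates $\Phi$ that carries $(\ref{eq:general system})$ into the normal form $(\ref{eq:normal form})$, following the global normal form construction of Byrnes and Isidori \cite{byrnes1991asymptotic} and specialising it to the case $r_i\in\{1,2\}$.

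First I would fix the output-related coordinates. For each $1\le i\le p$ set $\xi_i^{(k)} = L_f^{k-1}h_i$ for $1\le k\le r_i$; these furnish $\xi_1$ (the $r_i=1$ block) together with $(\xi_2,\xi_3)$ (the $r_i=2$ block, with $\xi_3=\dot\xi_2 = L_fh_i$). Hypothesis \textbf{H1} guarantees, via Definition \ref{def:vector relative degree}, that the matrix $A(x)$ in $(\ref{eq:A(x)})$ is nonsingular at every $x\in\mathbb R^n$ and that the differentials $\{dL_f^{k-1}h_i\}$ are linearly independent everywhere. Differentiating along $(\ref{eq:general system})$ then produces exactly the chain-of-integrators blocks in $(\ref{eq:normal form})$, with $a_1,a_2$ and $b_1,b_2$ the stated Lie derivatives.

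Second I would produce the internal coordinates $z\in\mathbb R^m$, where $m=n-(p_1+2p_2)$. Since $X_i^1 = ad_{\tilde f}^0\tilde g_i = \tilde g_i$ and $\mathrm{span}\{\tilde g_1,\dots,\tilde g_p\} = \mathrm{span}\{g^1,\dots,g^p\} = G$ (because $\tilde g = gA^{-1}$), hypothesis \textbf{H3} asserts that the columns of $\tilde g$ commute pairwise. Pairwise-commuting vector fields span an involutive distribution, so \textbf{H3} in particular recovers the involutivity of $G$ used in Lemma \ref{lemma:normal form conditions}. By the Frobenius theorem one can locally find functions $z_1,\dots,z_m$ whose differentials annihilate $G$, i.e. $L_gz_j=0$, which renders the $z$-dynamics input-free and yields $\dot z = f^*(z,\xi)$ as in $(\ref{eq:normal form a})$. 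This argument, however, only delivers $\Phi$ on a neighbourhood of the origin.

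The main obstacle — and the role of \textbf{H2} — is upgrading this to a global diffeomorphism on all of $\mathbb R^n$. Here I would not solve the Frobenius PDEs but instead integrate the commuting frame directly: the frame $\{X_i^k\}$ consists of vector fields that are complete by \textbf{H2} and pairwise-commuting as a consequence of \textbf{H3} together with the relative-degree structure, so their flows compose to a map that is independent of the order of composition. One then defines $\Phi^{-1}$ explicitly by flowing a base point on the zero-dynamics slice $\{\xi=0\}$ along the $X_i^k$ for prescribed times; completeness guarantees these flows are defined for all times, so the map is onto, while the commutativity and the independence of the differentials from the first step guarantee injectivity and a smooth inverse. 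Verifying that the resulting $\Phi$ is a bijective $C^\infty$ map with $C^\infty$ inverse is the genuinely global step and the crux of the argument, the relative-degree computations being purely local and routine \cite{isidori1995nonlinear}. Restricting to $r_i\in\{1,2\}$ merely fixes the block sizes $\xi_1\in\mathbb R^{p_1}$ and $\xi_2,\xi_3\in\mathbb R^{p_2}$, so the statement reduces to the general Byrnes--Isidori global normal form theorem.
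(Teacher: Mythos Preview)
Your sketch is correct and follows the construction behind Corollary~5.6 of \cite{byrnes1991asymptotic}, specialised to $r_i\in\{1,2\}$. The paper itself does not reproduce any of this argument: its entire proof of the lemma is the single sentence ``See Corollary~5.6 in \cite{byrnes1991asymptotic}'', so you have supplied considerably more detail than the paper while taking the same route.
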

\begin{proof}
See Corollary 5.6 in \cite{byrnes1991asymptotic}.	
\end{proof}

\begin{lemma}\label{lemma:rendering NI(OSNI)}
Consider the system (\ref{eq:normal form}) where $\left[\begin{matrix}b_1(z,\xi)\\ b_2(z,\xi)\end{matrix}
\right]$ is nonsingular. Then it can be rendered a nonlinear NI system as in Definition \ref{def:nonlinear NI} using the state feedback control law
\begin{equation}\label{eq:NI state feedback control}
u = \left[\begin{matrix}b_1(z,\xi)\\ b_2(z,\xi)\end{matrix}
\right]^{-1}\left(v-\left[\begin{matrix}a_1(z,\xi)+\left(\frac{\partial V_1(\xi_1)}{\partial \xi_1}\right)^T \\ a_2(z,\xi)+\left(\frac{\partial V_2(\xi_2)}{\partial \xi_2}\right)^T+\lambda \xi_3\end{matrix}
\right]\right),	
\end{equation}
where $v\in \mathbb R^p$ is the new input, $V_1(\xi_1)$ and $V_2(\xi_2)$ can be any positive semi-definite functions, and $\lambda\geq 0$ is a scalar. Moreover, if $\lambda>0$, then the resulting system is a nonlinear OSNI system as per Definition \ref{def:nonlinear OSNI} with degree of output strictness $\epsilon=\min\{1,\lambda\}$. The storage function of the nonlinear NI (OSNI) system is
\begin{equation}\label{eq:rendered NI storage function}
V(z,\xi)=\tilde V(\xi)= V_1(\xi_1)+V_2(\xi_2)+\frac{1}{2}\xi_3^T\xi_3.	
\end{equation}
\end{lemma}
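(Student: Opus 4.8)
The plan is to substitute the feedback law (\ref{eq:NI state feedback control}) into the normal form (\ref{eq:normal form}) and then verify the dissipation inequalities (\ref{eq:NI MIMO definition inequality}) and (\ref{eq:dissipativity of OSNI}) directly along the closed-loop trajectories, taking (\ref{eq:rendered NI storage function}) as the storage function. No deep machinery is needed; the argument is a single Lyapunov-type computation, so the whole task reduces to careful bookkeeping.

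First I would partition the new input as $v=[v_1^T\ v_2^T]^T$ with $v_1\in\mathbb{R}^{p_1}$ and $v_2\in\mathbb{R}^{p_2}$. Since $[b_1^T\ b_2^T]^T$ is nonsingular, substituting (\ref{eq:NI state feedback control}) exactly cancels the drift terms $a_1,a_2$ and yields the closed loop
\begin{align*}
\dot\xi_1 &= v_1 - \Bigl(\tfrac{\partial V_1}{\partial\xi_1}\Bigr)^T,\\
\dot\xi_2 &= \xi_3,\\
\dot\xi_3 &= v_2 - \Bigl(\tfrac{\partial V_2}{\partial\xi_2}\Bigr)^T - \lambda\xi_3.
\end{align*}
Because the output is $y=[\xi_1^T\ \xi_2^T]^T$, its derivative is $\dot y=[\dot\xi_1^T\ \xi_3^T]^T$, which I would record for later use. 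I would also note that $V$ in (\ref{eq:rendered NI storage function}) is positive semidefinite, and of class $C^1$ whenever $V_1,V_2$ are, so it is an admissible storage function; the internal dynamics $\dot z=f^*(z,\xi)$ play no role since $V$ does not depend on $z$.

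Next I would differentiate $V$ along the closed-loop trajectories, computing $\dot V=\frac{\partial V_1}{\partial\xi_1}\dot\xi_1+\frac{\partial V_2}{\partial\xi_2}\dot\xi_2+\xi_3^T\dot\xi_3$ term by term. The key observation is that the contribution $\frac{\partial V_2}{\partial\xi_2}\dot\xi_2=\frac{\partial V_2}{\partial\xi_2}\xi_3$ cancels against the matching scalar $\xi_3^T(\partial V_2/\partial\xi_2)^T$ produced by $\xi_3^T\dot\xi_3$, leaving
\[
\dot V=\frac{\partial V_1}{\partial\xi_1}v_1-\Bigl\|\Bigl(\tfrac{\partial V_1}{\partial\xi_1}\Bigr)^T\Bigr\|^2+\xi_3^Tv_2-\lambda\|\xi_3\|^2.
\]
Forming the supply-rate slack $v^T\dot y-\dot V$ and inserting $\dot y$ from above, the cross terms $v_2^T\xi_3$ and $\xi_3^Tv_2$ cancel and the $v_1$-dependent terms complete the square, giving
\[
v^T\dot y-\dot V=\Bigl\|v_1-\Bigl(\tfrac{\partial V_1}{\partial\xi_1}\Bigr)^T\Bigr\|^2+\lambda\|\xi_3\|^2\ \geq\ 0,
\]
which is exactly (\ref{eq:NI MIMO definition inequality}) and establishes the nonlinear NI property.

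Finally, for the OSNI claim with $\lambda>0$, I would compute $\|\dot y\|^2=\|v_1-(\partial V_1/\partial\xi_1)^T\|^2+\|\xi_3\|^2$ and compare it coefficientwise with the slack expression above: the coefficient on $\|v_1-(\partial V_1/\partial\xi_1)^T\|^2$ is $1$ and that on $\|\xi_3\|^2$ is $\lambda$, so choosing $\epsilon=\min\{1,\lambda\}$ gives $v^T\dot y-\dot V\geq\epsilon\|\dot y\|^2$, which is (\ref{eq:dissipativity of OSNI}) with the stated degree of output strictness. I expect the only real obstacle to be the algebraic bookkeeping rather than any conceptual difficulty — specifically, spotting the two cancellations (the $\xi_3$ terms in $\dot V$ and the $v_2$--$\xi_3$ cross terms in the slack) and the completion of the square in $v_1$, together with the transpose identities $v_1^T(\partial V_1/\partial\xi_1)^T=\frac{\partial V_1}{\partial\xi_1}v_1$ and $v_2^T\xi_3=\xi_3^Tv_2$ that make them work.
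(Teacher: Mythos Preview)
Your proposal is correct and follows essentially the same approach as the paper's own proof: substitute the feedback law to obtain the closed-loop dynamics, then compute the slack $v^T\dot y-\dot V$ and recognize it as $\|\dot\xi_1\|^2+\lambda\|\dot\xi_2\|^2$ (your expression $\|v_1-(\partial V_1/\partial\xi_1)^T\|^2+\lambda\|\xi_3\|^2$ is identically this, since $\dot\xi_1=v_1-(\partial V_1/\partial\xi_1)^T$ and $\dot\xi_2=\xi_3$). The only cosmetic difference is that the paper carries out the subtraction $\dot V-v^T\dot y$ in one pass rather than computing $\dot V$ separately first.
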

\begin{proof}
Let $v=[v_1^T \ v_2^T]^T$ in (\ref{eq:NI state feedback control}), where $v_1\in \mathbb R^{p_1}$ and $v_2 \in \mathbb R^{p_2}$. With the state feedback control (\ref{eq:NI state feedback control}), the system (\ref{eq:normal form}) now becomes
\begin{subequations}\label{eq:system rendered NI}
\begin{align}
\dot z =&\ f^*(z,\xi),\label{eq:system rendered NI a}\\
\dot \xi_1 = &\ v_1-\left(\frac{\partial V_1(\xi_1)}{\partial \xi_1}\right)^T,\label{eq:system rendered NI b}\\
\dot \xi_2 = &\ \xi_3,\\
\dot \xi_3 =&\ v_2-\left(\frac{\partial V_2(\xi_2)}{\partial \xi_2}\right)^T-\lambda \xi_3,\label{eq:system rendered NI d}\\
y =& \left[\begin{matrix}\xi_1\\ \xi_2\end{matrix}
\right].
\end{align}	
\end{subequations}
The nonlinear NI inequality is satisfied for the resulting system with  the positive semidefinite storage function (\ref{eq:rendered NI storage function}), which is shown in the following:
\begin{align}
\dot V&(z,\xi)-v^T\dot y\notag\\
 =&	\frac{\partial V(z,\xi)}{\partial z} \dot z + \frac{\partial V(z,\xi)}{\partial \xi_1} \dot \xi_1 + \frac{\partial V(z,\xi)}{\partial \xi_2} \dot \xi_2 + \frac{\partial V(z,\xi)}{\partial \xi_3} \dot \xi_3\notag\\
& - v^T\dot y\notag\\
=&\ 0 + \frac{\partial V_1(\xi_1)}{\partial \xi_1}\left(v_1-\left(\frac{\partial V_1(\xi_1)}{\partial \xi_1}\right)^T\right)+ \frac{\partial V_2(\xi_2)}{\partial \xi_2} \xi_3\notag\\
&  + \xi_3^T \left(v_2-\left(\frac{\partial V_2(\xi_2)}{\partial \xi_2}\right)^T-\lambda \xi_3\right)\notag\\
&-v_1^T\left(v_1-\left(\frac{\partial V_1(\xi_1)}{\partial \xi_1}\right)^T\right)-v_2^T \xi_3\notag\\
=& -\left(v_1^T-\frac{\partial V_1(\xi_1)}{\partial \xi_1}\right)\left(v_1-\left(\frac{\partial V_1(\xi_1)}{\partial \xi_1}\right)^T\right)-\lambda \xi_3^T\xi_3\notag\\
=& -\|\dot \xi_1\|^2-\lambda\|\dot \xi_2\|^2\notag\\
\leq & -\epsilon \| \dot y \|^2,\label{eq:nonlinear OSNI ineq of the rendered system}
\end{align}
where $\epsilon=\min\{1,\lambda\}$. The system is a nonlinear NI system as per Definition \ref{def:nonlinear NI}. Moreover, if $\lambda>0$, then $\epsilon>0$. In this case, the system is a nonlinear OSNI system and $\epsilon$ is the degree of output strictness of the system.	
\end{proof}
\begin{theorem}\label{theorem:render NI}
Suppose the system (\ref{eq:general system}) has relative degree less than or equal to two at $x=0$ and the distribution
\begin{equation*}\label{eq:G}
G = span\{g^1,g^2,\cdots,g^p\}	
\end{equation*}
is involutive. Then the system (\ref{eq:general system}) can be rendered a nonlinear NI (OSNI) system locally around $x=0$ using the state feedback control
\begin{align}\label{eq:NI state feedback control in theorem}
u = &A(x)^{-1}\left(v-\left[\begin{matrix}L_f^{r_1}h_1(x)\\ \vdots \\ L_f^{r_p}h_p(x) \end{matrix}\right]
-\left[\begin{matrix}\left(\frac{\partial V_1(\xi_1)}{\partial \xi_1}\right)^T \\ \left(\frac{\partial V_2(\xi_2)}{\partial \xi_2}\right)^T+\lambda \xi_3\end{matrix}
\right]\right),	
\end{align}
where $A(x)$ is defined in (\ref{eq:A(x)}), $v\in \mathbb R^p$ is the new input, $V_1(\xi_1)$ and $V_2(\xi_2)$ can be any positive semi-definite functions, and $\lambda\geq 0$ ($\lambda > 0$) is a scalar. Also, the function (\ref{eq:rendered NI storage function}) is a storage function for the resulting nonlinear NI (OSNI) system.
\end{theorem}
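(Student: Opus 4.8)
The plan is to derive this theorem directly from Lemmas~\ref{lemma:normal form conditions} and~\ref{lemma:rendering NI(OSNI)}, so that the entire argument reduces to translating the normal-form construction back into the original $x$-coordinates. First I would invoke Lemma~\ref{lemma:normal form conditions}: since (\ref{eq:general system}) has relative degree less than or equal to two at $x=0$ and the distribution $G$ is involutive, there exists a local diffeomorphism around $x=0$ carrying (\ref{eq:general system}) into the normal form (\ref{eq:normal form}), in which the output is $y=[\xi_1^T\ \xi_2^T]^T$. The nonsingularity of $\bigl[\begin{smallmatrix}b_1\\ b_2\end{smallmatrix}\bigr]=A(x)$ at the origin is precisely part (ii) of the vector relative degree hypothesis, so by continuity $A(x)$ is invertible on a neighbourhood of $x=0$ and the feedback is well defined there.

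Next I would apply Lemma~\ref{lemma:rendering NI(OSNI)} to this normal form. That lemma already establishes, through the storage function (\ref{eq:rendered NI storage function}) and the dissipation computation ending in (\ref{eq:nonlinear OSNI ineq of the rendered system}), that the feedback (\ref{eq:NI state feedback control}) renders the system nonlinear NI, and nonlinear OSNI with $\epsilon=\min\{1,\lambda\}$ when $\lambda>0$. Hence the only work left is to verify that the control law (\ref{eq:NI state feedback control in theorem}) stated in the theorem coincides with (\ref{eq:NI state feedback control}). This identification is the key step, but it is routine given the explicit expressions supplied by Lemma~\ref{lemma:normal form conditions}.

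Concretely, those expressions for $a_1$ and $a_2$ give
\[
\begin{bmatrix}a_1(z,\xi)\\ a_2(z,\xi)\end{bmatrix}
=\begin{bmatrix}L_fh_1(x)\\ \vdots\\ L_fh_{p_1}(x)\\ L_f^2h_{p_1+1}(x)\\ \vdots\\ L_f^2h_p(x)\end{bmatrix}
=\begin{bmatrix}L_f^{r_1}h_1(x)\\ \vdots\\ L_f^{r_p}h_p(x)\end{bmatrix},
\]
where the final equality uses $r_i=1$ for $i\leq p_1$ and $r_i=2$ for $i>p_1$. Together with $\bigl[\begin{smallmatrix}b_1\\ b_2\end{smallmatrix}\bigr]=A(x)$, substituting these into (\ref{eq:NI state feedback control}) reproduces exactly (\ref{eq:NI state feedback control in theorem}). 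Thus the two feedback laws are identical and the conclusion of Lemma~\ref{lemma:rendering NI(OSNI)} transfers verbatim: the closed-loop system is nonlinear NI (respectively OSNI when $\lambda>0$) locally around $x=0$, with storage function (\ref{eq:rendered NI storage function}).

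The main obstacle is conceptually minor and chiefly bookkeeping: one must respect the locality throughout, since the normal form of Lemma~\ref{lemma:normal form conditions} is guaranteed only on a neighbourhood of the origin and the feedback is defined only where $A(x)$ stays nonsingular. It is also worth confirming the coordinate identifications that make the mixed $x$- and $\xi$-dependence of (\ref{eq:NI state feedback control in theorem}) consistent, namely that $\xi_1,\xi_2$ are the output components and $\xi_3=\dot\xi_2$ equals $[L_fh_{p_1+1}(x)\ \cdots\ L_fh_p(x)]^T$; beyond this, no genuinely new estimate is required.
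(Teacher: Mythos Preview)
Your proposal is correct and follows exactly the same approach as the paper, which simply states that the result follows directly from Lemmas~\ref{lemma:normal form conditions} and~\ref{lemma:rendering NI(OSNI)}. Your additional bookkeeping (identifying $[a_1;a_2]$ with the stacked $L_f^{r_i}h_i(x)$ vector, $[b_1;b_2]$ with $A(x)$, and noting the locality coming from the neighbourhood where $A(x)$ stays nonsingular) just makes explicit what the paper leaves implicit.
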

\begin{proof}
	This result directly follows from Lemmas \ref{lemma:normal form conditions} and \ref{lemma:rendering NI(OSNI)}.
\end{proof}

\begin{theorem}\label{theorem:render globally NI}
Suppose the system (\ref{eq:general system}) satisfies H1, H2 and H3. Then the system (\ref{eq:general system}) can be globally rendered a nonlinear NI (OSNI) system using the state feedback control
(\ref{eq:NI state feedback control in theorem}). Also, the function $V(z,\xi)$ defined in (\ref{eq:rendered NI storage function}) is a storage function for the resulting nonlinear NI (OSNI) system.
\end{theorem}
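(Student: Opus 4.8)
The plan is to prove this exactly as the global analogue of Theorem~\ref{theorem:render NI}: where that result combined the \emph{local} normal-form construction of Lemma~\ref{lemma:normal form conditions} with the feedback of Lemma~\ref{lemma:rendering NI(OSNI)}, here I would instead combine the \emph{global} normal-form construction of Lemma~\ref{lemma:global normal form} with the same feedback lemma. Since hypotheses H1, H2 and H3 are precisely the hypotheses of Lemma~\ref{lemma:global normal form}, that lemma supplies a global diffeomorphism carrying the system (\ref{eq:general system}) onto the normal form (\ref{eq:normal form}) on all of $\mathbb{R}^n$, rather than merely on a neighbourhood of the origin.

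First I would invoke Lemma~\ref{lemma:global normal form} to obtain the globally valid coordinates $(z,\xi)$ and the normal form (\ref{eq:normal form}). Next I would check that the feedback (\ref{eq:NI state feedback control in theorem}), written in the original coordinates through $A(x)$ and the iterated Lie derivatives $L_f^{r_i}h_i(x)$, is identical to the feedback (\ref{eq:NI state feedback control}) of Lemma~\ref{lemma:rendering NI(OSNI)}, written in normal-form coordinates. This follows from the identifications in Lemma~\ref{lemma:normal form conditions}, namely that $b_1,b_2$ stack to $A(x)$ and that $a_1,a_2$ stack to the vector with entries $L_f^{r_i}h_i(x)$; substituting these into (\ref{eq:NI state feedback control}) reproduces (\ref{eq:NI state feedback control in theorem}) verbatim. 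These are coordinate-free Lie-derivative identities, so they remain valid at every $x\in\mathbb{R}^n$ under H1. Finally I would apply Lemma~\ref{lemma:rendering NI(OSNI)}: uniform relative degree (H1) makes $A(x)$ nonsingular everywhere, so the feedback is defined on all of $\mathbb{R}^n$, and the dissipation computation (\ref{eq:nonlinear OSNI ineq of the rendered system}) yields $\dot V \le v^T\dot y - \epsilon\|\dot y\|^2$ with $\epsilon=\min\{1,\lambda\}$, together with the storage function (\ref{eq:rendered NI storage function}). This gives the nonlinear NI property, and the OSNI property when $\lambda>0$.

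The only point needing genuine care is the passage from local to global. In Theorem~\ref{theorem:render NI} the coordinates $(z,\xi)$, and hence the storage function $V(z,\xi)$, live only near the origin, so the NI inequality is asserted only locally; here the global diffeomorphism of Lemma~\ref{lemma:global normal form} guarantees that $(z,\xi)$ are \emph{bona fide} global coordinates, that $A(x)$ is invertible along every trajectory, and that $V(z,\xi)$ is a well-defined $C^1$ function on the whole state space. Consequently the dissipation inequality holds for all $t\ge 0$ irrespective of how far the trajectory travels, which upgrades the conclusion to a global one. Beyond this globalization there is no new analytic content: the algebra verifying both the feedback identity and the dissipation inequality is word-for-word that of Lemma~\ref{lemma:rendering NI(OSNI)}, so I do not anticipate any substantive obstacle.
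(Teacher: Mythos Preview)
Your proposal is correct and follows essentially the same approach as the paper: the paper's proof simply says to repeat the argument of Theorem~\ref{theorem:render NI}, replacing the local normal-form Lemma~\ref{lemma:normal form conditions} with the global normal-form Lemma~\ref{lemma:global normal form}, which is exactly what you have spelled out. Your additional remarks on why the feedback identity and storage function globalize are sound elaborations but add nothing the paper does not already implicitly rely on.
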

\begin{proof}
See the proof of Theorem \ref{theorem:render NI}, using Lemma \ref{lemma:global normal form} instead of Lemma \ref{lemma:normal form conditions}.	
\end{proof}

Considering the restriction in condition (ii) of Definition \ref{def:vector relative degree}, some systems do not have vector relative degrees. However, for a system that does not have a vector relative degree, sometimes there exists an output transformation that transforms it into a system with a vector relative degree.
We generalize Theorem \ref{theorem:render NI} by showing that the result is invariant to a nonsingular output transformation.
\begin{lemma}\label{lemma:output transformation invariant NI}
A system with output $y$ and input $u$ is a nonlinear NI (OSNI) system if and only if the system with output $\tilde y=T_y y$ and input $\tilde u=T_y^{-T}u$ is a nonlinear NI (OSNI) system. Here $T_y$ is a nonsingular constant matrix.
\end{lemma}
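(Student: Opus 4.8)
The plan is to exploit the fact that the NI supply rate $u^T\dot y$ is left completely unchanged by the proposed transformation, so that the very same storage function certifies both systems in the NI case, while only a rescaling of the output-strictness constant is needed in the OSNI case. First I would compute the transformed supply rate directly. Since $T_y$ is a constant matrix, differentiating $\tilde y=T_y y$ gives $\dot{\tilde y}=T_y\dot y$, and combining this with $\tilde u=T_y^{-T}u$ yields
\[
\tilde u^T\dot{\tilde y}=(T_y^{-T}u)^T(T_y\dot y)=u^TT_y^{-1}T_y\dot y=u^T\dot y .
\]
Thus the NI supply rate is an invariant of the transformation, which is the crux of the whole argument.

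For the nonlinear NI equivalence, suppose the original system admits a positive semidefinite $C^1$ storage function $V$ satisfying $\dot V\le u^T\dot y$. By the identity above this is exactly $\dot V\le\tilde u^T\dot{\tilde y}$, so the \emph{same} $V$ witnesses that the transformed system is nonlinear NI. The converse follows by symmetry: the inverse transformation $y=T_y^{-1}\tilde y$, $u=T_y^T\tilde u$ is of precisely the same form with the nonsingular matrix $T_y^{-1}$ in place of $T_y$ (note $(T_y^{-1})^{-T}=T_y^T$), so the identical argument applied in reverse shows that NI of the transformed system implies NI of the original.

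The OSNI case needs one extra step, because the strictness term $\epsilon\|\dot y\|^2$ is \emph{not} invariant. Here $\|\dot{\tilde y}\|^2=\dot y^TT_y^TT_y\dot y$, and since $T_y$ is nonsingular the matrix $T_y^TT_y$ is symmetric positive definite. I would use the bound $\|T_y\dot y\|^2\le\lambda_{max}(T_y^TT_y)\|\dot y\|^2$ and set $\tilde\epsilon:=\epsilon/\lambda_{max}(T_y^TT_y)>0$. Starting from the original OSNI inequality $\dot V\le u^T\dot y-\epsilon\|\dot y\|^2$ and invoking the invariance of the supply rate,
\[
\dot V\le\tilde u^T\dot{\tilde y}-\epsilon\|\dot y\|^2\le\tilde u^T\dot{\tilde y}-\tilde\epsilon\|T_y\dot y\|^2=\tilde u^T\dot{\tilde y}-\tilde\epsilon\|\dot{\tilde y}\|^2 ,
\]
so the transformed system is OSNI with degree of output strictness $\tilde\epsilon$; the converse again follows from the symmetry of the inverse transformation.

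The computations are routine, and the only point requiring care is the OSNI direction, where one must correctly track how the strictness constant rescales under $T_y$. I expect the mild obstacle to be choosing the right eigenvalue bound in each direction — the largest eigenvalue of $T_y^TT_y$ for the forward implication and, correspondingly, the smallest eigenvalue (equivalently $\lambda_{max}$ of $(T_yT_y^T)^{-1}$) for the converse — so that the resulting degree of output strictness stays strictly positive both ways.
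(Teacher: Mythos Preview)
Your proposal is correct and follows essentially the same approach as the paper: both establish the invariance $\tilde u^T\dot{\tilde y}=u^T\dot y$ for the NI case and handle the OSNI case via the bound $\|\dot{\tilde y}\|^2\le\lambda_{max}(T_y^TT_y)\|\dot y\|^2$, yielding the rescaled strictness constant $\epsilon/\lambda_{max}(T_y^TT_y)$, with the converse obtained by applying the inverse transformation. Your write-up is in fact slightly more explicit than the paper's about the symmetry argument for the converse direction.
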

\begin{proof}
Considering that
\begin{equation*}
\tilde u^T \dot {\tilde y} = u^T T_y^{-1}T_y \dot y = u^T\dot y,	
\end{equation*}
the nonlinear NI inequality (\ref{eq:NI MIMO definition inequality}) is satisfied for one of these two systems if and only if it is satisfied for the other. Also, we have that
\begin{equation*}
\|\dot {\tilde y}\|^2=\dot y^T T_y^TT_y \dot y \leq \lambda_{max}(T_y^TT_y) \|\dot y\|^2.
\end{equation*}
Therefore, if the system with input $u$ and output $y$ is a nonlinear OSNI system, then there exists a positive semidefinite storage function $V(x)$ such that
\begin{align}
\dot V(x) \leq & u^T \dot y -\epsilon \|\dot y\|^2 \notag\\
\leq & \tilde u^T \dot {\tilde y}-\frac{\epsilon}{\lambda_{max}(T_y^TT_y)}\|\dot {\tilde y}\|^2.\notag
\end{align}
This implies that the system with input $\tilde u$ and output $\tilde y$ is also a nonlinear OSNI system. The sufficiency part can be proved similarly by considering the inverses of the transformations.
\end{proof}
\begin{lemma}\label{lemma:invariant equivalent NI}
Consider a system of the form (\ref{eq:general system}) and an output transformation $\tilde y = T_y y$ where $T_y\in \mathbb R^{p\times p}$ is a nonsingular constant matrix. If there exists a state feedback control law
\begin{equation*}
u = k_x(x)x+k_u(x)v,
\end{equation*}
under which the system with the new input $v\in \mathbb R^p$ is a nonlinear NI (OSNI) system, then the output transformed system; i.e., the system with input $u$ and output $\tilde y = T_y y$, can also be rendered a nonlinear NI (OSNI) system using the state feedback control law
\begin{equation*}
u = k_x(x)x + k_u(x)T_y^T \tilde v	,
\end{equation*}
where $\tilde v \in \mathbb R^p$ is the new input.
\end{lemma}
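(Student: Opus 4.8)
The plan is to obtain this result purely by composing the two preceding lemmas, with no new dissipation inequality to verify. The hypothesis supplies a static state feedback $u = k_x(x)x + k_u(x)v$ under which the closed-loop system, regarded as having \emph{input} $v$ and \emph{output} $y$, is nonlinear NI (OSNI). The target system differs only in that its output is relabelled to $\tilde y = T_y y$, and I want a feedback making it NI (OSNI) in the input $\tilde v$. The natural idea is to feed the conclusion of the hypothesis into Lemma~\ref{lemma:output transformation invariant NI}, which is exactly an output-scaling-invariance statement, and then undo the change of input variable at the level of the feedback law.

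Concretely, I would first apply Lemma~\ref{lemma:output transformation invariant NI} to the \emph{closed-loop} system produced by the hypothesis. In that lemma, the role of ``input $u$'' is played here by the closed-loop input $v$ and the role of ``output $y$'' by $y$. Since this closed-loop system is nonlinear NI (OSNI), the lemma immediately yields that the system with output $\tilde y = T_y y$ and input $T_y^{-T}v$ is again nonlinear NI (OSNI); in the OSNI case the degree of output strictness simply rescales to $\epsilon/\lambda_{\max}(T_y^T T_y)>0$, as computed inside that lemma, so the OSNI property is preserved. Defining the genuinely new input $\tilde v := T_y^{-T} v$, equivalently $v = T_y^T \tilde v$, this says precisely that the state trajectories generated by the original feedback, reparametrized through $v = T_y^T\tilde v$, satisfy the NI (OSNI) inequality with respect to $\tilde v$ and $\tilde y$.

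It then remains to rewrite the feedback in the physical input $u$. Substituting $v = T_y^T \tilde v$ into $u = k_x(x)x + k_u(x)v$ gives $u = k_x(x)x + k_u(x)T_y^T \tilde v$, which is exactly the claimed control law; because this substitution leaves the closed-loop state dynamics $\dot x = f(x)+g(x)u$ unchanged, the trajectories under input $\tilde v$ coincide with those just shown to be NI (OSNI), completing the argument. The only point requiring care---and the step I expect to be the main (minor) obstacle---is the bookkeeping of which quantity is the ``input'' when invoking Lemma~\ref{lemma:output transformation invariant NI}: it must be applied to the closed-loop input $v$ and not to the plant input $u$, so that the input transformation $\tilde u = T_y^{-T}u$ of that lemma becomes $\tilde v = T_y^{-T}v$ here, and the inverse relation $v = T_y^T\tilde v$ is what produces the $T_y^T$ factor in the final feedback law.
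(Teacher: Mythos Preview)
Your proposal is correct and follows essentially the same route as the paper: apply Lemma~\ref{lemma:output transformation invariant NI} to the closed-loop system (so that its ``input'' is $v$), obtain that the system with input $\tilde v = T_y^{-T}v$ and output $\tilde y = T_y y$ is nonlinear NI (OSNI), and then substitute $v = T_y^T\tilde v$ into the original feedback to recover the stated control law. Your bookkeeping remark about applying the lemma at the level of the closed-loop input $v$ rather than the plant input $u$ is exactly the right point of care, and in fact the paper's proof is terser than yours (it even contains a self-referential typo, citing Lemma~\ref{lemma:invariant equivalent NI} where Lemma~\ref{lemma:output transformation invariant NI} is clearly intended).
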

\begin{proof}
According to Lemma \ref{lemma:invariant equivalent NI}, the system with input $v$ and output $y$ is nonlinear NI (OSNI) if and only if the system with input $\tilde v = T_y^{-T}v$ and output $\tilde y = T_y y$ is nonlinear NI (OSNI). This completes the proof.
\end{proof}
\begin{corollary}\label{corollary:render output transformed systems NI}
Suppose the system (\ref{eq:general system}) can be output transformed into a system with relative degree less than or equal to two at $x=0$ using the output transformation 
\begin{equation}\label{eq:output transformation}
\tilde y = T_y y,
\end{equation}
where $T_y\in \mathbb R^{p\times p}$ is a nonsingular constant matrix. Also, suppose the distribution
\begin{equation*}\label{eq:G}
G = span\{g^1,g^2,\cdots,g^p\}	
\end{equation*}
is involutive. Then the system (\ref{eq:general system}) can be rendered a nonlinear NI (OSNI) system locally around $x=0$ using state feedback control.
\end{corollary}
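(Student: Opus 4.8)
The plan is to reduce the corollary to Theorem \ref{theorem:render NI}, which already handles systems whose own output yields relative degree less than or equal to two, and then to carry the nonlinear NI (OSNI) property back across the output transformation using the invariance results in Lemmas \ref{lemma:output transformation invariant NI} and \ref{lemma:invariant equivalent NI}.

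First I would introduce the output-transformed system $\tilde\Sigma$ that shares the state equation $\dot x = f(x)+g(x)u$ of (\ref{eq:general system}) but has output $\tilde y = T_y y = T_y h(x)$. The crucial observation is that an output transformation alters neither $f$ nor the input vector fields $g^1,\cdots,g^p$, so the distribution $G=\mathrm{span}\{g^1,\cdots,g^p\}$ attached to $\tilde\Sigma$ is identical to that of the original system and is therefore involutive by hypothesis. Since by assumption $\tilde\Sigma$ also has relative degree less than or equal to two at $x=0$, it satisfies every hypothesis of Theorem \ref{theorem:render NI}.

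Next I would apply Theorem \ref{theorem:render NI} to $\tilde\Sigma$. This produces a state feedback law of the form (\ref{eq:NI state feedback control in theorem}), written for the output $\tilde y$, under which $\tilde\Sigma$ with new input $v$ and output $\tilde y$ is nonlinear NI (OSNI), with storage function (\ref{eq:rendered NI storage function}). Abbreviating this law as $u=k_x(x)x+k_u(x)v$, the final step is to transfer NI-ness from output $\tilde y$ back to the original output $y=T_y^{-1}\tilde y$. For this I would invoke Lemma \ref{lemma:invariant equivalent NI} with $T_y^{-1}$ playing the role of the output-transformation matrix: since the feedback above renders the $\tilde y$-system nonlinear NI (OSNI), the lemma yields the adjusted feedback $u=k_x(x)x+k_u(x)T_y^{-T}\tilde v$ rendering the system with input $u$ and original output $y$ nonlinear NI (OSNI). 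This is exactly the assertion of the corollary.

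Each step is routine, so I expect no substantive obstacle; the only points demanding care are bookkeeping the direction of the transformation—applying the invariance lemma with $T_y^{-1}$ rather than $T_y$, which accounts for the $T_y^{-T}$ appearing in the final control law—and confirming explicitly that involutivity of $G$ survives the output transformation, since that is what legitimizes invoking Theorem \ref{theorem:render NI} on $\tilde\Sigma$ in the first place.
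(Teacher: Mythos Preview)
Your proposal is correct and follows essentially the same approach as the paper, which simply states that the result follows directly from Theorem~\ref{theorem:render NI} and Lemma~\ref{lemma:invariant equivalent NI}. Your write-up supplies the bookkeeping the paper omits---in particular, the observation that $G$ is unchanged by the output transformation and the care taken with the direction of the transformation when invoking the invariance lemma---but the underlying argument is identical.
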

\begin{proof}
The proof follows directly from Theorem \ref{theorem:render NI} and Lemma \ref{lemma:invariant equivalent NI}.	
\end{proof}

\begin{corollary}\label{corollary:render globally output transformed systems NI}
Suppose the system (\ref{eq:general system}) can be output transformed into a system satisfying H1, H2 and H3 using the output transformation (\ref{eq:output transformation}). Then the system (\ref{eq:general system}) can be globally rendered a nonlinear NI (OSNI) system using state feedback control.
\end{corollary}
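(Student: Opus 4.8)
The plan is to mirror, almost verbatim, the proof of the local Corollary \ref{corollary:render output transformed systems NI}, simply substituting the global rendering result for the local one. First I would pass to the output-transformed system, i.e.\ the system (\ref{eq:general system}) equipped with the new output $\tilde y = T_y y$ and the same input $u$. By hypothesis this transformed system has uniform relative degree less than or equal to two and satisfies H1, H2 and H3, so Theorem \ref{theorem:render globally NI} applies to it directly: there is a state feedback control law of the form (\ref{eq:NI state feedback control in theorem}), expressed through the entries of $\tilde y$ and their Lie derivatives, under which the $\tilde y$-system (with new input $v$) is globally a nonlinear NI (OSNI) system, with storage function (\ref{eq:rendered NI storage function}).

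Next I would transfer this renderability from the output $\tilde y$ back to the original output $y$ via the output-transformation invariance of Lemmas \ref{lemma:output transformation invariant NI} and \ref{lemma:invariant equivalent NI}. The key observation is that, since $T_y$ is a nonsingular constant matrix, the relation may be read in either direction: $y = T_y^{-1}\tilde y$. Applying Lemma \ref{lemma:invariant equivalent NI} with the roles of the two outputs interchanged and with the constant matrix $T_y^{-1}$ in place of $T_y$, the fact that the $\tilde y$-system can be rendered NI (OSNI) by state feedback at once yields that the $y$-system, which is exactly the original system (\ref{eq:general system}), can be rendered NI (OSNI) by the correspondingly modified feedback law $u = k_x(x)x + k_u(x)T_y^{-T}\tilde v$. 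This would establish the claim.

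I do not expect a genuine obstacle here; the only point that needs checking is that the invariance step is global, and this is immediate. The argument underlying Lemma \ref{lemma:output transformation invariant NI} is purely algebraic: it rewrites the supply rate as $u^T\dot y = \tilde u^T\dot{\tilde y}$ and uses the bound $\|\dot{\tilde y}\|^2 \leq \lambda_{max}(T_y^T T_y)\|\dot y\|^2$ pointwise in the state, so no localization is ever introduced and the dissipation inequality holds wherever the globally defined feedback is defined. In the OSNI case one need only note that the degree of output strictness is rescaled by the constant $\lambda_{max}(T_y^T T_y)$ (or its reciprocal, according to the transformation direction), which stays strictly positive, so the OSNI property is preserved globally as well. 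The hard part, such as it is, is merely bookkeeping of the transformation direction, so that $T_y^{-T}$ rather than $T_y^T$ appears in the final feedback formula.
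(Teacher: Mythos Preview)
Your proposal is correct and follows essentially the same approach as the paper, which simply invokes Theorem~\ref{theorem:render globally NI} together with Lemma~\ref{lemma:invariant equivalent NI}. Your added remarks about reversing the transformation direction and about the pointwise (hence global) nature of the invariance argument are sound elaborations of exactly this route.
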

\begin{proof}
The proof follows directly from Theorem \ref{theorem:render globally NI} and Lemma \ref{lemma:invariant equivalent NI}.	
\end{proof}

\begin{remark}
In the state feedback control laws applied in Theorem \ref{theorem:render NI} and Corollary \ref{corollary:render output transformed systems NI}, the only choice of $\lambda$ that makes the resulting system a nonlinear NI system but not a nonlinear OSNI system is $\lambda=0$. Since full state feedback is available, we can simply choose $\lambda>0$ and render the system a nonlinear OSNI system in order to obtain more strictness. This strictness is helpful in the stabilization of the system.
\end{remark}

If the system (\ref{eq:general system}) is rendered a nonlinear OSNI system with a storage function which is positive definite with respect to $\xi$, then according to the dissipation inequality (\ref{eq:dissipativity of OSNI}), giving zero input to the system will result the boundedness of the state $\xi$. Indeed, as will be shown later, the state $\xi$ will converge to zero under zero input. Given the stability of the state $\xi$, we consider the question of what additional conditions are needed in order to make the state $z$ also stable. First, we need to provide several definitions regarding to the internal dynamics in the normal form of a nonlinear system.

\begin{definition}[(Globally) Minimum Phase]\label{def:minimum phase}\cite{isidori1995nonlinear,byrnes1991passivity}
A nonlinear system which has a normal form is said to be (globally) minimum phase if its zero dynamics have a (globally) asymptotically stable equilibrium at the origin.
\end{definition}

In the case of system (\ref{eq:normal form}), the minimum phase property guarantees that when $\xi=0$, if $z$ is finite, it will also converge to zero. In other words, if we view the state $\xi$ as the input to the internal dynamics $\dot z = f^*(z,\xi)$, then (global) minimum phase property implies that the internal dynamics is (globally) asymptotically stable with zero input, namely 0-AS (0-GAS) for short. However, examples in \cite{sontag1989smooth,sontag2008input,khalil2002nonlinear} show that for systems that are 0-AS (0-GAS), its state may diverge under a bounded input that converges to zero. This phenomena motivated the concept of input-to-state stability (ISS) \cite{sontag1989smooth,sontag2008input}. As is discussed in \cite{liberzon2002output}, the asymptotic stability of the zero dynamics is sometimes insufficient for control design purposes until it is combined with the ISS property of the internal dynamics. This is a common requirement (see for example \cite{praly1993stabilization}). Let us now recall the definitions of ISS and locally ISS (LISS) systems.

To avoid introducing new system models, let us consider the system of the form (\ref{eq:normal form a}). Let us rewrite it in the following as a seperate system:
\begin{equation}\label{eq:system model ISS definition}
\dot z = f^*(z,\xi),	
\end{equation}
where $\xi$ acts as the input to this system.

\begin{definition}[Input-to-State Stability]\label{def:ISS}\cite{sontag1989smooth,sontag2008input,khalil2002nonlinear,liberzon2003switching,isidori1995nonlinear}
The system (\ref{eq:system model ISS definition}) is said to be input-to-state stable (ISS) if there exist a class $\mc{KL}$ function $\beta$ and a class $\mc K$ function $\gamma$ such that for any initial state $z(t_0)$ and any bounded input $\xi(t)$, the solution $z(t)$ exists for all $t\geq t_0$ and satisfies
\begin{equation}\label{eq:ISS ineq}
\|z(t)\| \leq \beta(\|z(t_0)\|,t-t_0)+\gamma\left(\sup_{t_0\leq\tau\leq t}\|\xi(\tau)\|\right).	
\end{equation}
\end{definition}

\begin{definition}[Locally Input-to-State Stability]\label{def:LISS}\cite{sontag1996new}
The system (\ref{eq:system model ISS definition}) is said to be locally input-to-state stable (LISS) if there exist a class $\mc{KL}$ function $\beta$, a class $\mc K$ function $\gamma$ and constants $\rho_z,\rho_{\xi}>0$ such that for any initial state $z(t_0)$ with $\|z(t_0)\|\leq \rho_z$ and any bounded input $\xi(t)$ with $\sup_{t_0\leq\tau\leq t}\|\xi(\tau)\|\leq \rho_{\xi}$ for all $t\geq t_0$, the solution $z(t)$ exists and satisfies (\ref{eq:ISS ineq}) for all $t\geq t_0$.
\end{definition}

\begin{lemma}\label{lemma:ISS x convergence}\cite{khalil2002nonlinear,liberzon2003switching}
Suppose the system (\ref{eq:system model ISS definition}) is ISS. If $\xi(t)\to 0$ as $t\to \infty$, so does $z(t)$.	
\end{lemma}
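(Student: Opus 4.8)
The plan is to exploit the time-invariance (causality) of the system (\ref{eq:system model ISS definition}), so that the ISS estimate (\ref{eq:ISS ineq}) may be reinitialized from an arbitrary instant rather than only from $t_0$. First I would note that since $\xi(t)\to 0$, the input is bounded on $[t_0,\infty)$, so the ISS property itself guarantees that the solution $z(t)$ exists for all $t\geq t_0$ and is bounded; in particular $z(T)$ is finite for every $T\geq t_0$. This finiteness of the reinitialized state is what makes the subsequent decay argument go through.

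The core of the argument is an $\epsilon$-$T$ splitting. Fix $\epsilon>0$. Because $\gamma$ is of class $\mc K$ with $\gamma(0)=0$ and $\xi(t)\to 0$, I can choose a time $T_1\geq t_0$ large enough that $\sup_{\tau\geq T_1}\|\xi(\tau)\|$ is so small that $\gamma\bigl(\sup_{\tau\geq T_1}\|\xi(\tau)\|\bigr)\leq \epsilon/2$. Applying the ISS estimate (\ref{eq:ISS ineq}) with initial time $T_1$ in place of $t_0$ then yields, for all $t\geq T_1$,
\[
\|z(t)\|\leq \beta\bigl(\|z(T_1)\|,\,t-T_1\bigr)+\gamma\Bigl(\sup_{T_1\leq\tau\leq t}\|\xi(\tau)\|\Bigr)\leq \beta\bigl(\|z(T_1)\|,\,t-T_1\bigr)+\tfrac{\epsilon}{2}.
\]
With $\|z(T_1)\|$ fixed and finite, the class $\mc{KL}$ function $\beta(\|z(T_1)\|,\cdot)$ decreases to zero in its second argument, so there is $T_2\geq T_1$ with $\beta\bigl(\|z(T_1)\|,\,t-T_1\bigr)\leq \epsilon/2$ for all $t\geq T_2$. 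Hence $\|z(t)\|\leq \epsilon$ for all $t\geq T_2$, and since $\epsilon$ was arbitrary, $z(t)\to 0$.

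The main obstacle to guard against is the temptation to pass to the limit directly in (\ref{eq:ISS ineq}) as written: the supremum $\sup_{t_0\leq\tau\leq t}\|\xi(\tau)\|$ runs over the entire history and need not vanish even when $\xi(t)\to 0$, since it still sees the possibly large early values of the input. The whole point is therefore to discard that transient by restarting the estimate at $T_1$, which is legitimate precisely because (\ref{eq:system model ISS definition}) is time-invariant, so the same ISS pair $(\beta,\gamma)$ governs the flow from any initial instant. Everything else is routine bookkeeping with the monotonicity of class $\mc K$ and class $\mc{KL}$ functions.
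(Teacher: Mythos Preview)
Your proof is correct and follows essentially the same route as the paper's: both reinitialize the ISS estimate at a late time $T_1$ chosen so that $\gamma$ of the remaining input tail is at most $\epsilon/2$, and then use the class $\mc{KL}$ decay of $\beta(\|z(T_1)\|,\cdot)$ to absorb the transient term. Your version is in fact slightly more careful in explicitly justifying that $z(T_1)$ is finite (via boundedness of $\xi$ and the ISS estimate on $[t_0,\infty)$) before invoking the reinitialization.
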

\begin{proof}(See also Exercise 4.58 in \cite{khalil2002nonlinear} and its solution manual).
We show that for any $\epsilon>0$, there exists a $T>0$ such that $\|z(t)\| \leq \epsilon$, $\forall t\geq T$. Since $\gamma$ is a class $\mc K$ function, then given $\epsilon>0$, there exists $\epsilon_1>0$ such that $\gamma (\epsilon_1)\leq \frac{\epsilon}{2}$. Since $\lim_{t\to \infty}\xi(t) = 0$,	given $\epsilon_1$ there is a $T_1>0$ such that $\|\xi(t)\|\leq \epsilon_1$ for $t\geq T_1$. Take $t_1>T_1$. Then for $t>t_0$, we have
\begin{align}
	\|z(t)\| \leq & \beta(\|z(t_1)\|,t-t_1)+\gamma\left(\epsilon_1\right)\notag\\
	\leq & \beta(c,t-t_1)+\frac{\epsilon}{2},\notag
\end{align}
where $c=\|z(t_1)\|$ is a constant. Since $\beta$ is a class $\mc {KL}$ function and $\|z(t_1)\|$ is bounded, then $\|\beta(c,t-t_1)\|\to 0$ as $t\to 0$. There exists a $T_2>0$ such that $\|\beta(c,t-t_1)\|\leq \frac{\epsilon}{2}$, for all $t>T_2$. Thus,
\begin{equation*}
\|z(t)\|\leq \epsilon, \forall t\geq T = \max\{T_1,T_2\},	
\end{equation*}
which shows that $\lim_{t\to\infty}z(t)\to 0$.
\end{proof}

\begin{lemma}\label{lemma:LISS x convergence}\cite{khalil2002nonlinear,liberzon2003switching}
If the system (\ref{eq:system model ISS definition}) is LISS, then there exist constants $\tilde\rho_z,\tilde\rho_{\xi}>0$ such that for any initial state $z(t_0)$ with $\|z(t_0)\|\leq \tilde\rho_z$ and any bounded input $\xi(t)$ with $\sup_{t_0\leq\tau < \infty}\|\xi(\tau)\|\leq \tilde\rho_{\xi}$ and $\xi(t)\to 0$ as $t \to \infty$, we have $z(t)\to 0$ as $t\to\infty$.	
\end{lemma}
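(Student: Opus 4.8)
The plan is to mirror the proof of the ISS case in Lemma \ref{lemma:ISS x convergence}, inserting one preliminary step that confines the trajectory to the region on which the LISS estimate is valid. The genuinely new difficulty, absent in the global case, is that the bound (\ref{eq:ISS ineq}) supplied by the LISS definition holds only while $\|z(t_0)\|\le\rho_z$ and $\sup_\tau\|\xi(\tau)\|\le\rho_\xi$; in order to reapply the estimate with a later time as the new initial instant, I must first ensure that $\|z(t)\|$ never escapes the ball of radius $\rho_z$.

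First I would choose the constants $\tilde\rho_z,\tilde\rho_\xi$. Using that $\beta$ is decreasing in its second argument, the LISS bound gives, whenever $\|z(t_0)\|\le\rho_z$ and $\sup_\tau\|\xi(\tau)\|\le\rho_\xi$,
\[
\|z(t)\|\le\beta\big(\|z(t_0)\|,0\big)+\gamma\Big(\sup_{t_0\le\tau<\infty}\|\xi(\tau)\|\Big).
\]
Since $\beta(\cdot,0)$ and $\gamma$ are class $\mc K$ functions, they are continuous and vanish at the origin, so I can pick $\tilde\rho_z\in(0,\rho_z]$ and $\tilde\rho_\xi\in(0,\rho_\xi]$ small enough that $\beta(\tilde\rho_z,0)+\gamma(\tilde\rho_\xi)\le\rho_z$. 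With these choices, any trajectory starting from $\|z(t_0)\|\le\tilde\rho_z$ and driven by $\sup_\tau\|\xi(\tau)\|\le\tilde\rho_\xi$ satisfies $\|z(t)\|\le\rho_z$ for all $t\ge t_0$; hence the LISS estimate may be legitimately restarted from any later instant, since the corresponding state then serves as an admissible (ball-$\rho_z$) initial condition.

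Finally I would reproduce the convergence argument of Lemma \ref{lemma:ISS x convergence}. Given $\epsilon>0$, choose $\epsilon_1>0$ with $\gamma(\epsilon_1)\le\frac{\epsilon}{2}$ and $\epsilon_1\le\rho_\xi$; since $\xi(t)\to0$, there is $T_1$ with $\|\xi(t)\|\le\epsilon_1$ for all $t\ge T_1$. Fixing $t_1>T_1$ and restarting the bound from $t_1$ (valid because $\|z(t_1)\|\le\rho_z$), I obtain
\[
\|z(t)\|\le\beta\big(\|z(t_1)\|,t-t_1\big)+\gamma(\epsilon_1)\le\beta(\rho_z,t-t_1)+\frac{\epsilon}{2}.
\]
As $\beta(\rho_z,\cdot)\to0$, there exists $T_2$ beyond which the first term is at most $\frac{\epsilon}{2}$, giving $\|z(t)\|\le\epsilon$ for all $t\ge\max\{T_1,T_2\}$; since $\epsilon$ is arbitrary, $z(t)\to0$. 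I expect the forward-invariance step, rather than the limiting argument, to be the only delicate part, as it is precisely what keeps the otherwise purely local estimate applicable over the whole infinite horizon.
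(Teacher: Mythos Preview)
Your proposal is correct and follows essentially the same approach as the paper: first pick $\tilde\rho_z\le\rho_z$ and $\tilde\rho_\xi\le\rho_\xi$ so that $\beta(\tilde\rho_z,0)+\gamma(\tilde\rho_\xi)\le\rho_z$ (the paper simply splits this as $\rho_z/2+\rho_z/2$), use the resulting forward invariance $\|z(t)\|\le\rho_z$ to justify restarting the LISS estimate at an arbitrary later instant, and then repeat the $\epsilon/2$ argument from the ISS lemma. The only cosmetic difference is that you bound $\|z(t_1)\|$ by $\rho_z$ directly in the final estimate, whereas the paper writes $c=\|z(t_1)\|$; both versions are fine.
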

\begin{proof}
Suppose the system is LISS, then there exist $\rho_z$ and $\rho_{\xi}$ such that for any initial state $z(t_0)$ with $\|z(t_0)\|\leq \rho_z$ and any bounded input $u(t)$ with $\sup_{t_0\leq\tau\leq t}\|\xi(\tau)\|\leq \rho_{\xi}$ for all $t\geq t_0$, we have that $z(t)$ exists and satisfies (\ref{eq:ISS ineq}). Considering that $\gamma$ is a class $\mc K$ function, then there exists $\tilde\rho_{\xi}\leq \rho_{\xi}$ such that
\begin{equation*}
\gamma\left(\tilde\rho_{\xi}\right) \leq \frac{\rho_z}{2}.
\end{equation*}
Also, considering $\beta$ is a class $\mc{KL}$ function, there exists $\tilde\rho_z\leq \rho_z$ such that
\begin{equation*}
\beta\left(\tilde\rho_z,0\right)	 \leq \frac{\rho_z}{2}.
\end{equation*}
Let $\|z(t_0)\|\leq \tilde\rho_z$ and $\sup_{t_0\leq\tau < \infty}\|\xi(\tau)\|\leq \tilde\rho_{\xi}$. According to the LISS property, for all $t\geq t_0$, $z(t)$ exists and satisfies
\begin{align}
	\|z(t)\| \leq & \beta(\|z(t_0)\|,t-t_0)+\gamma\left(\sup_{t_0\leq\tau\leq t}\|\xi(\tau)\|\right)\notag\\
	\leq & \beta\left(\tilde\rho_z,0\right)+\gamma\left(\tilde\rho_{\xi}\right)\notag\\
	\leq & \rho_z.\notag\label{eq:ineq in LISS lemma proof}
\end{align}
Since $\|z(t)\|\leq \rho_z$, the property in (\ref{eq:ISS ineq}) still holds if $t_0$ is substituted by any $t_1\geq t_0$. The rest of the proof follows as in the proof of Lemma \ref{lemma:ISS x convergence}, using Definition \ref{def:LISS} instead of Definition \ref{def:ISS}.
\end{proof}

As is proved in \cite{sontag1996new}, ISS implies 0-GAS and LISS implies 0-AS. We now introduce a new version of the (global) minimum phase property in the following, in which the 0-AS (0-GAS) requirement is replaced by an LISS (ISS) requirement. Hence, the new definition is stricter than Definition \ref{def:minimum phase}.

\begin{definition}[(Globally) Strictly Minimum Phase]
A system is said to be (Globally) strictly minimum phase if its internal dynamics of the form
\begin{equation*}
\dot z = f^*(z,\xi)
\end{equation*}
are LISS (ISS) with respect to $\xi$.
\end{definition}

\begin{theorem}\label{theorem:locally stabilization with 0 new input}
After possible output transformation (\ref{eq:output transformation}), suppose the system (\ref{eq:general system}) satisfies the following:

(i). it has relative degree less than or equal to two at $x=0$;

(ii). the distribution
\begin{equation*}\label{eq:G}
G = span\{g^1,g^2,\cdots,g^p\}	
\end{equation*}
is involutive;

(iii). the system (\ref{eq:general system}) is strictly minimum phase around $x=0$. Then the system (\ref{eq:general system}) can be locally asymptotically stabilized using the state feedback control law
\begin{align}\label{eq:stabilizing state feedback}
u = -A(x)^{-1} \left(\left[\begin{matrix}L_f^{r_1}h_1(x)\\ \vdots \\ L_f^{r_p}h_p(x) \end{matrix}\right]+\left[\begin{matrix}\left(\frac{\partial V_1(\xi_1)}{\partial \xi_1}\right)^T \\ \left(\frac{\partial V_2(\xi_2)}{\partial \xi_2}\right)^T+\lambda \xi_3\end{matrix}
\right]\right),	
\end{align}
where $A(x)$ is defined in (\ref{eq:A(x)}), $V_1(\xi_1)$ and $V_2(\xi_2)$ can be any positive definite functions, and $\lambda>0$ is a scalar.	
\end{theorem}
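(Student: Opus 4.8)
The plan is to recognize the control law (\ref{eq:stabilizing state feedback}) as the rendering feedback (\ref{eq:NI state feedback control in theorem}) specialized to $v=0$, so that Theorem \ref{theorem:render NI} (together with Corollary \ref{corollary:render output transformed systems NI} to absorb the output transformation) already applies. First I would invoke these results to put the closed loop into the form (\ref{eq:system rendered NI}) with $v=0$ and to produce the storage function $V(z,\xi)=\tilde V(\xi)=V_1(\xi_1)+V_2(\xi_2)+\tfrac12\xi_3^T\xi_3$ from (\ref{eq:rendered NI storage function}), where now $V_1,V_2$ are taken positive definite and $\lambda>0$. The crucial structural observation is that, with $v=0$, the $\xi$-equations in (\ref{eq:system rendered NI}) do not involve $z$: the closed loop is a cascade in which the autonomous $\xi$-subsystem drives the internal dynamics $\dot z=f^*(z,\xi)$.

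Next I would establish local asymptotic stability of the decoupled $\xi$-subsystem using $\tilde V$ as a Lyapunov function. Since $V_1,V_2$ are positive definite and $\tfrac12\xi_3^T\xi_3$ is positive definite in $\xi_3$, the function $\tilde V$ is positive definite in $\xi=[\xi_1^T\ \xi_2^T\ \xi_3^T]^T$. Setting $v=0$ in the computation (\ref{eq:nonlinear OSNI ineq of the rendered system}) gives $\dot{\tilde V}=-\|\dot\xi_1\|^2-\lambda\|\xi_3\|^2\le 0$, so $\tilde V$ is nonincreasing and the origin of the $\xi$-subsystem is Lyapunov stable. Applying LaSalle's invariance principle, trajectories converge to the largest invariant set contained in $\{\dot\xi_1=0,\ \xi_3=0\}$. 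On this set $\xi_3\equiv 0$ forces $\dot\xi_2=0$ and $\dot\xi_3=0$, whence $\partial V_2/\partial\xi_2=0$, while $\dot\xi_1=0$ gives $\partial V_1/\partial\xi_1=0$; choosing $V_1,V_2$ so that their gradients vanish only at the origin in a neighbourhood (for instance positive definite quadratic forms), this invariant set reduces to $\{\xi=0\}$. Hence $\xi(t)\to 0$ as $t\to\infty$, and for sufficiently small $\xi(t_0)$ the trajectory stays small so that $\sup_{\tau\ge t_0}\|\xi(\tau)\|$ is arbitrarily small.

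I would then propagate this to the internal state $z$. By hypothesis (iii) the system is strictly minimum phase around $x=0$, i.e.\ $\dot z=f^*(z,\xi)$ is LISS with respect to $\xi$ in the sense of Definition \ref{def:LISS}. Since $\xi$ is bounded, small, and converges to zero, Lemma \ref{lemma:LISS x convergence} yields $z(t)\to 0$ as $t\to\infty$ whenever $\|z(t_0)\|$ and $\sup_\tau\|\xi(\tau)\|$ are small enough. The LISS bound (\ref{eq:ISS ineq}) simultaneously gives Lyapunov stability of $z$: small $z(t_0)$ and small $\xi$ keep $\|z(t)\|$ small. Combining the Lyapunov stability and attractivity of $\xi$ with those of $z$ for sufficiently small initial conditions yields local asymptotic stability of the full state $(z,\xi)=(0,0)$, which is the claim.

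The main obstacle is the LaSalle step, specifically guaranteeing that the origin is the only point in a neighbourhood at which the gradients $\partial V_1/\partial\xi_1$ and $\partial V_2/\partial\xi_2$ vanish; a general $C^1$ positive definite function may possess nearby critical points, so one must either restrict $V_1,V_2$ (e.g.\ to positive definite quadratic forms) or argue that such critical points are excluded locally. A secondary bookkeeping issue is to make all the smallness requirements compatible --- the neighbourhood of validity of the local normal form from Lemma \ref{lemma:normal form conditions}, the LaSalle region for $\xi$, and the LISS radii $\tilde\rho_z,\tilde\rho_\xi$ from Lemma \ref{lemma:LISS x convergence} --- so that a single neighbourhood of the origin works for the combined cascade.
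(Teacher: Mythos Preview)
Your proposal is correct and follows essentially the same approach as the paper: put the closed loop in the form (\ref{eq:system rendered NI}) with $v=0$, use $\tilde V(\xi)$ as a Lyapunov function for the $\xi$-subsystem, run a LaSalle argument that reduces the invariant set to $\{\xi=0\}$ by choosing $V_1,V_2$ with gradients vanishing only at the origin, and then invoke the strictly minimum phase (LISS) assumption via Lemma~\ref{lemma:LISS x convergence} to conclude $z\to 0$. If anything, you are more careful than the paper in explicitly noting the cascade structure, invoking Corollary~\ref{corollary:render output transformed systems NI} to handle the output transformation in the hypothesis, and flagging the need to reconcile the several local neighbourhoods.
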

\begin{proof}
Note that the functions $V_1(\xi_1)$ and $V_2(\xi_2)$ are now positive definite. Under the state feedback (\ref{eq:stabilizing state feedback}), the system becomes the system (\ref{eq:system rendered NI}) but with $v=0$. We define the storage function $\tilde V(\xi)$ of this system to be the same as that in (\ref{eq:rendered NI storage function}
) but with $V_1(\xi_1)$ and $V_2(\xi_2)$ positive definite. Therefore, $\tilde V(\xi)$ is positive definite. The inequality (\ref{eq:nonlinear OSNI ineq of the rendered system}) implies that
\begin{equation*}
\dot {\tilde V}(\xi)\leq -\epsilon\|\dot y\|^2,
\end{equation*}
where $\epsilon>0$. This means that $\dot {\tilde V}(\xi)\equiv 0$ is only possible if $\dot y \equiv 0$. This implies that $\dot \xi_1 \equiv 0$, $\dot \xi_2 \equiv 0$ and therefore $\xi_3 \equiv 0$. According to (\ref{eq:system rendered NI b}) and (\ref{eq:system rendered NI d}) and considering that $v=0$, this is only possible if $\frac{\partial V_1(\xi_1)}{\partial \xi_1}=0$ and $\frac{\partial V_2(\xi_2)}{\partial \xi_2}=0$. Since $V_1(\xi_1)$ and $V_2(\xi_2)$ can be any positive definite functions, then we can choose $V_1(\xi_1)$ and $V_2(\xi_2)$ such that $\frac{\partial V_1(\xi_1)}{\partial \xi_1}=0$ only at $\xi_1=0$ and $\frac{\partial V_2(\xi_2)}{\partial \xi_2}=0$ only at $\xi_2=0$. For example, a suitable choice is $V_1(\xi_1)=\frac{1}{2}\xi_1^T\xi_1$ and $V_2(\xi_2)=\frac{1}{2}\xi_2^T\xi_2$. Therefore, with these $V_1(\xi_1)$ and $V_2(\xi_2)$ used in the state feedback, $\dot y\equiv 0$ only at $\xi=0$. Otherwise, $\dot {\tilde V}(\xi)<0$ and the state $\xi$ will also converge to zero. Since the system is strictly minimum phase, then the internal dynamics (\ref{eq:system rendered NI a}) are LISS. According to Lemma \ref{lemma:LISS x convergence}, $z$ will converge to $0$. Therefore, the system is asymptotically stabilized locally around $x=0$.
\end{proof}
\begin{remark}
In Theorem \ref{theorem:locally stabilization with 0 new input}, if the strictly minimum phase requirement in Condition (iii) is replaced by the standard global minimum phase requirement as defined in Definition \ref{def:minimum phase}, then Theorem \ref{theorem:locally stabilization with 0 new input} still holds. This is because 0-GAS implies LISS as is proved in \cite{sontag1996new}, and hence global minimum phase implies strictly minimum phase. Reference \cite{sontag1996new} includes many equivalences and implications between different stability properties. Here, we only use one of these equivalent properties as a requirement in our stability result.
\end{remark}

\begin{theorem}\label{theorem:globally stabilization with 0 new input}
After possible output transformation (\ref{eq:output transformation}), suppose the system (\ref{eq:general system}) satisfies H1, H2 and H3. Also, suppose the system (\ref{eq:general system}) is globally strictly minimum phase. Then the system (\ref{eq:general system}) can be globally asymptotically stabilized using the state feedback (\ref{eq:stabilizing state feedback}).	
\end{theorem}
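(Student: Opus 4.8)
The plan is to run the argument of Theorem~\ref{theorem:locally stabilization with 0 new input} verbatim, replacing every local ingredient by its global counterpart. First I would invoke Lemma~\ref{lemma:global normal form}: since the system satisfies \textbf{H1}, \textbf{H2} and \textbf{H3}, it is \emph{globally} diffeomorphic to a system in the normal form (\ref{eq:normal form}), so the coordinate change $x\mapsto(z,\xi)$ and the feedback (\ref{eq:stabilizing state feedback}) are well defined on all of $\mathbb R^n$. Applying (\ref{eq:stabilizing state feedback}) --- which is (\ref{eq:NI state feedback control in theorem}) with $v=0$ and with $V_1,V_2$ positive definite --- yields the closed-loop system (\ref{eq:system rendered NI}) with $v=0$. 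The structural observation that makes the rest work is that after this feedback the $\xi$-subsystem is \emph{autonomous}: equations (\ref{eq:system rendered NI b})--(\ref{eq:system rendered NI d}) depend on $\xi$ alone, so the closed loop is a cascade in which $\xi$ drives the internal dynamics $\dot z=f^*(z,\xi)$ but $z$ does not feed back into $\xi$.

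Next I would establish global attractivity of $\xi=0$ for this autonomous $\xi$-subsystem. I choose $V_1,V_2$ to be radially unbounded positive definite functions, for instance $V_1(\xi_1)=\tfrac12\xi_1^T\xi_1$ and $V_2(\xi_2)=\tfrac12\xi_2^T\xi_2$, so that the storage function $\tilde V(\xi)=V_1(\xi_1)+V_2(\xi_2)+\tfrac12\xi_3^T\xi_3$ of (\ref{eq:rendered NI storage function}) is positive definite and radially unbounded in $\xi$. The OSNI estimate (\ref{eq:nonlinear OSNI ineq of the rendered system}) gives $\dot{\tilde V}(\xi)\le-\epsilon\|\dot y\|^2\le0$ with $\epsilon=\min\{1,\lambda\}>0$. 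I would then apply LaSalle's invariance principle on $\mathbb R^{p_1+2p_2}$, which is valid globally precisely because $\tilde V$ is radially unbounded and hence guarantees bounded $\xi$-trajectories from every initial condition: on the set $\{\dot{\tilde V}=0\}$ one has $\dot y\equiv0$, so $\dot\xi_1\equiv0$ and $\xi_3\equiv0$; substituting $\xi_3\equiv0$ into (\ref{eq:system rendered NI d}) forces $\partial V_2/\partial\xi_2=0$, while $\dot\xi_1\equiv0$ forces $\partial V_1/\partial\xi_1=0$, and with the chosen $V_1,V_2$ this occurs only at $\xi_1=\xi_2=0$. Thus the largest invariant set is $\{\xi=0\}$ and $\xi(t)\to0$ for arbitrary initial $\xi$.

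Finally I would close the cascade. By hypothesis the system is globally strictly minimum phase, so the internal dynamics (\ref{eq:system rendered NI a}) are ISS with respect to $\xi$. Having shown $\xi(t)\to0$, Lemma~\ref{lemma:ISS x convergence} yields $z(t)\to0$ from any initial $z$; moreover the ISS bound (\ref{eq:ISS ineq}) together with the boundedness of $\xi$ precludes finite escape of $z$, so all trajectories are defined for all $t\ge0$. Combining the two parts, $(z,\xi)\to0$ for every initial condition, which is global asymptotic stability of the closed loop.

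The main obstacle, and the only genuine difference from the local theorem, is upgrading both local mechanisms to global ones simultaneously: the invariance argument for $\xi$ requires $\tilde V$ to be radially unbounded, so I must commit to radially unbounded $V_1,V_2$ rather than merely positive definite ones, and the implication $\xi\to0\Rightarrow z\to0$ requires the full ISS property rather than LISS. The latter is exactly what the globally strictly minimum phase assumption supplies, allowing Lemma~\ref{lemma:ISS x convergence} to play the role that Lemma~\ref{lemma:LISS x convergence} played in the local case. Everything else is identical to the proof of Theorem~\ref{theorem:locally stabilization with 0 new input}.
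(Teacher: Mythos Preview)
Your proposal is correct and follows essentially the same approach as the paper, which simply instructs the reader to rerun the proof of Theorem~\ref{theorem:locally stabilization with 0 new input} replacing Lemma~\ref{lemma:normal form conditions} by Lemma~\ref{lemma:global normal form} and Lemma~\ref{lemma:LISS x convergence} by Lemma~\ref{lemma:ISS x convergence}. Your explicit insistence on radially unbounded $V_1,V_2$ so that LaSalle applies globally is a detail the paper glosses over but is indeed required for the argument to go through.
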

\begin{proof}
See the proof of Theorem \ref{theorem:locally stabilization with 0 new input}, using Lemmas \ref{lemma:global normal form} and \ref{lemma:ISS x convergence} instead of Lemmas \ref{lemma:normal form conditions} and \ref{lemma:LISS x convergence} for a global result.	
\end{proof}

\section{CONTROLLER SYNTHESIS FOR A SYSTEM WITH NONLINEAR NI UNCERTAINTY}\label{sec:NI uncertainty}
Suppose a system of the form (\ref{eq:general system}) has uncertainty that can be modelled as a nonlinear NI system. Denote the uncertainty as $H_c$. The system model of $H_c$ is
\begin{subequations}\label{eq:uncertainty}
\begin{align}
H_c:\quad\dot x_c =& f_c(x_c,u_c),\\
y_c =& h_c(x_c),
\end{align}	
\end{subequations}
where $x_c\in \mathbb R^{n_c}$ is the state, $u_c\in \mathbb R^p$ is the input, and $y_c\in \mathbb R^p$ is the output, $f_c:\mathbb R^{n_c}\times \mathbb R^p \to \mathbb R^{n_c}$ is a Lipschitz continuous function and $h_c:\mathbb R^{n_c} \to \mathbb R^p$ is a class $C^1$ function. Suppose the system has at least one equilibrium. Then without loss of generality, we can assume $f_c(0,0)=0$ and $h_c(0)=0$ after a possible coordinate shift.

When full state information is available, we aim to stabilize the uncertain system using a state feedback controller as shown in the left-hand side (LHS) of Fig.~\ref{fig:controller synthesis}.

\begin{figure}[h!]
\centering
\psfrag{delta}{\hspace{0.1cm}$H_c$}
\psfrag{nominal}{\hspace{-0.05cm}\small Nominal}
\psfrag{plant}{\hspace{-0.2cm}\small Plant $\Sigma$}
\psfrag{controller}{\hspace{-0.1cm}\small Controller}
\psfrag{closed-loop}{\hspace{-0.1cm}\small Closed-Loop}
\psfrag{w}{$w$}
\psfrag{x}{$x$}
\psfrag{y}{$y$}
\psfrag{u}{$u$}
\psfrag{R_s}{\hspace{0.1cm}\small $H_p$}
\psfrag{+}{\scriptsize$+$}
\includegraphics[width=8.5cm]{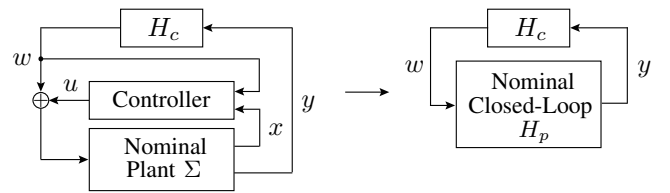}
\caption{A feedback control system. The nominal plant $\Sigma$ has a plant uncertainty $H_c$, which can be described as a nonlinear NI system. Under suitable assumptions, we can find a state feedback control input such that the resulting closed-loop system is guaranteed to be asymptotically stable.}
\label{fig:controller synthesis}
\end{figure}

The interconnection can be described by the following equations:
\begin{subequations}\label{eq:closed-loop}
\begin{align}
\dot x =& f(x)+g(x)(u+w),\label{eq:closed-loop a}\\
y =& h(x),\label{eq:closed-loop b}\\
\dot x_c =& f_c(x_c,u_c),\\
y_c =& h_c(x_c),\label{eq:closed-loop d}\\
w=&y_c,\label{eq:closed-loop e}\\
u_c=&y.\label{eq:closed-loop f}
\end{align}
\end{subequations}

\begin{theorem}\label{theorem:stabilization}
	Suppose the nominal plant $\Sigma$ of the form (\ref{eq:general system}) is strictly minimal phase and has relative degree less than or equal to two around $x=0$ and the distribution
\begin{equation*}\label{eq:G}
G = span\{g^1,g^2,\cdots,g^p\}	
\end{equation*}
is involutive. Let $\xi_1 = [y_1^T,\cdots y_{p_1}^T]^T$ and $\xi_2 = [y_{p_1+1}^T,\cdots, y_p^T]^T$ denote the vectors containing the output entries corresponding to the ones and twos in the vector relative degree, respectively. Let $\xi_3 = \dot \xi_2$. Suppose that the systems (\ref{eq:uncertainty}) is nonlinear NI with storage function $V_c(x_c)$. If there exist positive definite functions $V_1(\xi_1)$ and $V_2(\xi_2)$ such that the function defined as
\begin{equation}\label{eq:storage function closed-loop}
W(\xi,x_c)=V_1(\xi_1)+V_2(\xi_2)+\frac{1}{2}\xi_3^T\xi_3+V_c(x_c)-h_c(x_c)^T\left[\begin{matrix}\xi_1\\ \xi_2\end{matrix}
\right]
\end{equation}
is positive definite, then the system (\ref{eq:closed-loop}) is locally asymptotically stabilized by the state feedback control law
\begin{align}\label{eq:state feedback control}
u = A(x)^{-1}&\left( \left(I-\left[\begin{matrix}L_gL_f^{r_1-1}h_1(x)\\ \vdots \\ L_gL_f^{r_p-1}h_p(x) \end{matrix}\right]\right)w\right.\notag\\
&-\left.\left[\begin{matrix}L_f^{r_1}h_1(x)\\ \vdots \\ L_f^{r_p}h_p(x) \end{matrix}\right]
-\left[\begin{matrix}\left(\frac{\partial V_1(\xi_1)}{\partial \xi_1}\right)^T \\ \left(\frac{\partial V_2(\xi_2)}{\partial \xi_2}\right)^T+\lambda \xi_3\end{matrix}
\right]\right),	
\end{align}
where $A(x)$ is defined in (\ref{eq:A(x)}), $w\in \mathbb R^p$ is the output of the uncertainty $H_c$, and $\lambda$ is a positive scalar.
\end{theorem}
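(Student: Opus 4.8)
The plan is to show that, under the feedback (\ref{eq:state feedback control}), the function $W$ in (\ref{eq:storage function closed-loop}) is a Lyapunov function for the $(\xi,x_c)$-subsystem, and then to recover convergence of the internal state $z$ from the strictly minimum phase hypothesis. First I would substitute (\ref{eq:state feedback control}) into (\ref{eq:closed-loop a}). Since $[\,L_gL_f^{r_1-1}h_1(x);\,\cdots;\,L_gL_f^{r_p-1}h_p(x)\,]$ is exactly $A(x)$ as in (\ref{eq:A(x)}), the injected term $(I-A(x))w$ is engineered so that the total control $A(x)(u+w)$ reproduces $w$ minus the gradient-plus-damping injection. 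Hence the plant part collapses to the rendered form (\ref{eq:system rendered NI}) with the new input $v$ replaced by the uncertainty output $w=y_c$; in particular the $\xi$-dynamics and the $x_c$-dynamics (through $u_c=y=[\xi_1^T\ \xi_2^T]^T$) depend only on $(\xi,x_c)$ and not on $z$, so the $(\xi,x_c)$-subsystem is self-contained.

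Next I would differentiate $W$ along the closed loop. Writing $\tilde V(\xi)=V_1(\xi_1)+V_2(\xi_2)+\tfrac12\xi_3^T\xi_3$, the computation in the proof of Lemma \ref{lemma:rendering NI(OSNI)} (the chain ending in (\ref{eq:nonlinear OSNI ineq of the rendered system}), now with $v=w=y_c$) gives $\dot{\tilde V}=y_c^T\dot y-\|\dot\xi_1\|^2-\lambda\|\xi_3\|^2$, while the nonlinear NI property of $H_c$ with storage function $V_c$ gives $\dot V_c\le u_c^T\dot y_c=y^T\dot y_c$. Differentiating the coupling term $-h_c(x_c)^T[\xi_1^T\ \xi_2^T]^T=-y_c^Ty$ produces $-\dot y_c^Ty-y_c^T\dot y$. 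The plan is then to observe that the cross terms telescope: the $y_c^T\dot y$ from $\dot{\tilde V}$ and the $y^T\dot y_c$ bound on $\dot V_c$ cancel against $-y_c^T\dot y-\dot y_c^Ty$, leaving $\dot W\le -\|\dot\xi_1\|^2-\lambda\|\xi_3\|^2\le 0$. This cancellation is precisely what the uncertainty-coupling term in (\ref{eq:storage function closed-loop}) is designed to achieve.

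Since $W$ is assumed positive definite and $\dot W\le 0$, the origin of the $(\xi,x_c)$-subsystem is stable and small sublevel sets of $W$ are compact and positively invariant, so I would invoke LaSalle's invariance principle. On the largest invariant set contained in $\{\dot W=0\}$ one has $\dot\xi_1\equiv 0$ and, because $\lambda>0$, $\xi_3\equiv 0$, whence $\dot\xi_2\equiv 0$ and $y$ is constant. Choosing $V_1(\xi_1)=\tfrac12\xi_1^T\xi_1$ and $V_2(\xi_2)=\tfrac12\xi_2^T\xi_2$, the relations $\dot\xi_1=w_1-\xi_1=0$ from (\ref{eq:system rendered NI b}) and $\dot\xi_3=w_2-\xi_2-\lambda\xi_3=0$ from (\ref{eq:system rendered NI d}) force $[\xi_1^T\ \xi_2^T]^T=y_c$, i.e.\ $u_c=y=y_c$ on this set. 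I would then argue that this relation, together with $\dot y_c\equiv 0$ and the positive definiteness of $W$, admits only the equilibrium as an invariant trajectory near the origin, giving $(\xi,x_c)\to 0$ and in particular $\xi\to 0$.

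Finally, since the plant is locally strictly minimum phase, its internal dynamics $\dot z=f^*(z,\xi)$ are LISS with respect to $\xi$; with $\xi\to 0$ established above, Lemma \ref{lemma:LISS x convergence} yields $z\to 0$, and combining $z\to 0$, $\xi\to 0$ and $x_c\to 0$ gives local asymptotic stability of (\ref{eq:closed-loop}). I expect the main obstacle to be the invariant-set step: unlike the disturbance-free Theorem \ref{theorem:locally stabilization with 0 new input}, here $w=y_c$ need not vanish on $\{\dot y\equiv 0\}$, so the gradient condition alone does not pin $\xi$ (and $x_c$) to the origin, and ruling out a nontrivial invariant trajectory along which $y_c$, $V_c$ and $\xi$ are constant while $x_c$ drifts requires genuinely exploiting the coupling term in $W$ and a zero-state detectability-type property of the nonlinear NI uncertainty.
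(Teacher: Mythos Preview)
Your overall strategy matches the paper's: reduce to the rendered form (\ref{eq:system rendered NI}) with $v=w=y_c$, differentiate $W$ and exploit the telescoping of the cross terms to obtain $\dot W\le -\epsilon\|\dot y\|^2$, apply LaSalle, and finish with the LISS property of the internal dynamics via Lemma \ref{lemma:LISS x convergence}. The computation you outline for $\dot W$ is exactly the one the paper carries out.

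Where you diverge from the paper is the invariant-set step, and your own diagnosis of the obstacle is accurate. On $\{\dot y\equiv 0\}$ you obtain $\xi_3\equiv 0$, $\xi_1,\xi_2$ constant, and $w=[(\partial V_1/\partial\xi_1)^T,\ (\partial V_2/\partial\xi_2)^T]^T$ constant; with your quadratic choice this gives $y=y_c$. But positive definiteness of $W$ alone does not rule out a nontrivial invariant trajectory with $\xi$ and $y_c$ constant while $x_c$ drifts on a level set of $h_c$, and invoking a zero-state detectability property of $H_c$ would be an extra hypothesis not present in the statement. The paper closes this gap differently: it observes that on the invariant set the uncertainty is in steady state with constant input $y$ and constant output $w=[\nabla V_1(\xi_1)^T,\ \nabla V_2(\xi_2)^T]^T$, writes this as $w=\kappa(y)$ for the steady-state input-output map of $H_c$, and then uses the \emph{design freedom} in $V_1,V_2$---augmenting them by further positive definite terms, which preserves positive definiteness of $W$---so that the graph of $[\nabla V_1,\ \nabla V_2]$ meets the graph of $\kappa$ only at the origin. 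That freedom, not any intrinsic property of a fixed quadratic choice or of $W$ itself, is what the paper uses to force $y\equiv 0$ and hence $\xi\to 0$, $x_c\to 0$.
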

\begin{proof}
With the state feedback control (\ref{eq:state feedback control}), the nominal plant (\ref{eq:closed-loop a}), (\ref{eq:closed-loop b}) now becomes the nominal closed-loop system $H_p$, as shown on the right-hand side of Fig.~\ref{fig:controller synthesis}. The system $H_p$ has a normal form similar to (\ref{eq:system rendered NI}), where $v_1$ and $v_2$ are replaced by $w_1$ and $w_2$, respectively. $w=[w_1^T\ w_2^T]^T$.  According to Theorem \ref{theorem:render NI}, the system $H_p$ is a nonlinear OSNI system with the storage function
\begin{equation*}
V(z,\xi)=\tilde V(\xi) = V_1(\xi_1)+V_2(\xi_2)+\frac{1}{2}\xi_3^T\xi_3,
\end{equation*}
which is positive semidefinite because $V(z,\xi)=W(\xi,0)$.
This storage function satisfies the nonlinear OSNI inequality:
\begin{equation*}
\dot V(z,\xi)\leq w^T \dot y-\epsilon \| \dot y \|^2,
\end{equation*}
where $\epsilon>0$ quantifies the output strictness of the system. For the interconnection of the nonlinear OSNI system $H_p$ and the nonlinear NI system $H_c$ that is shown on the RHS of Fig.~\ref{fig:controller synthesis}, we use the function (\ref{eq:storage function closed-loop}) as a Lyapunov storage function. Since $H_c$ is a nonlinear NI uncertainty with storage function $V_c(x_c)$, we have that
\begin{equation*}
\dot V_c(x_c)\leq u_c^T \dot y_c	
\end{equation*}
according to Definition \ref{def:nonlinear NI}. We show in the following that the stability of this interconnection is guaranteed according to Lyapunov's stability theorem. We have that
\begin{align}
\dot W &(\xi,x_c)\notag\\
 =& \dot V(z,\xi)+\dot V_c(x_c)-\dot h_c(x_c)^T\left[\begin{matrix}\xi_1\\ \xi_2\end{matrix}
\right] - h_c(x_c)^T\left[\begin{matrix}\dot \xi_1\\ \dot \xi_2\end{matrix}
\right]\notag\\
\leq & w^T\dot y - \epsilon \|\dot y\|^2 +u_c^T\dot h_c(x_c)-\dot h_c(x_c)^T\left[\begin{matrix}\xi_1\\ \xi_2\end{matrix}
\right]\notag\\
& - h_c(x_c)^T\left[\begin{matrix}\dot \xi_1\\ \dot \xi_2\end{matrix}
\right]\notag\\
=& - \epsilon \|\dot y\|^2\notag\\
\leq & 0,\notag
\end{align}
where the equality also uses (\ref{eq:closed-loop b}), (\ref{eq:closed-loop d}), (\ref{eq:closed-loop e}) and (\ref{eq:closed-loop f}).
Therefore, $\dot W(y,x_2)\equiv 0$ is only possible if $\dot y\equiv 0$. In this case, $\dot \xi_1\equiv 0$ and $\dot \xi_2 \equiv 0$. This implies that $\xi_1,\xi_2$ remain constant and $\xi_3\equiv 0$. Also, according to (\ref{eq:system rendered NI b}) and (\ref{eq:system rendered NI d}), we have that $w_1 = \left(\frac{\partial V_1(\xi_1)}{\partial \xi_1}\right)^T$ and $w_2 = \left(\frac{\partial V_2(\xi_2)}{\partial \xi_2}\right)^T$, which now both remain constant. Considering the setting (\ref{eq:closed-loop e}) of the interconnection, the output $y_c$ of the nonlinear NI uncertainty $H_c$ now remains constant. For the nonlinear NI uncertainty $H_c$, its input $u_c=y$ and output $y_c=w$ both remain constant. Moreover, given constant input $[\xi_1^T\ \xi_2^T]^T$ to the system $H_c$, we get constant output $[\frac{\partial V_1(\xi_1)}{\partial \xi_1}\ \frac{\partial V_2(\xi_2)}{\partial \xi_2}]^T$. We prove in the following that this situation can be avoided. Suppose steady-state input-output relationship of the uncertainty system $H_c$ is described by some function $\bar y = \kappa (\bar u)$, where $\bar y$ and $\bar u$ are the constant output and input respectively in steady state. Then we can always add additional positive definite functions to $V_1(\xi_1)$ and $V_2(\xi_2)$ such that the curve of $[\frac{\partial V_1(\xi_1)}{\partial \xi_1}\ \frac{\partial V_2(\xi_2)}{\partial \xi_2}]^T$ intersects with the curve of $\kappa(\left[\begin{matrix}	\xi_1 \\ \xi_2 \end{matrix}
\right])$ only at the origin. That is, the entire closed-loop system cannot remain in a steady state unless $y \equiv 0$. Note that this will not affect the positive definiteness of $W(\xi,x_c)$ because we are adding positive definite functions to $V_1(\xi_1)$ and $V_2(\xi_2)$. Otherwise if $y$ does not identically remain zero, $\dot W(\xi,x_c)$ cannot remain zero. It will keep decreasing until $y=0$ and $x_c=0$. This means that eventually $\xi\to 0$. Since $\Sigma$ is strictly minimal phase, $z$ will also converge to zero. Therefore, the closed-loop system as described in (\ref{eq:closed-loop}) is asymptotically stabilized locally around $x=0$.
\end{proof}

\begin{theorem}\label{theorem:global stabilization}
	Suppose the nominal plant $\Sigma$ of the form (\ref{eq:general system}) is globally strictly minimal phase and satisfies H1, H2 and H3. Let $\xi_1 = [y_1^T,\cdots y_{p_1}^T]^T$ and $\xi_2 = [y_{p_1+1}^T,\cdots, y_p^T]^T$ denote the vectors containing the output entries corresponding to the ones and twos in the vector relative degree, respectively. Let $\xi_3 = \dot \xi_2$. Suppose that the systems (\ref{eq:uncertainty}) is nonlinear NI with storage function $V_c(x_c)$. If there exist positive definite functions $V_1(\xi_1)$ and $V_2(\xi_2)$ such that the function (\ref{eq:storage function closed-loop}) is positive definite, then the system (\ref{eq:closed-loop}) is globally robustly stabilized by the state feedback control law (\ref{eq:state feedback control}).
\end{theorem}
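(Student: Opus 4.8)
The plan is to follow the proof of Theorem \ref{theorem:stabilization} verbatim in structure, replacing each local construction by its global counterpart. First I would invoke Lemma \ref{lemma:global normal form}: since the nominal plant $\Sigma$ satisfies H1, H2 and H3, it is globally diffeomorphic to a system in the normal form (\ref{eq:normal form}). Applying the state feedback (\ref{eq:state feedback control}) then turns the nominal loop $H_p$ into the closed-loop normal form (\ref{eq:system rendered NI}) with $v_1,v_2$ replaced by $w_1,w_2$, now valid on all of $\mathbb{R}^n$. By Theorem \ref{theorem:render globally NI}, $H_p$ is a globally nonlinear OSNI system with positive semidefinite storage function $V(z,\xi)=\tilde V(\xi)=V_1(\xi_1)+V_2(\xi_2)+\tfrac12\xi_3^T\xi_3$, satisfying $\dot V(z,\xi)\le w^T\dot y-\epsilon\|\dot y\|^2$ with $\epsilon=\min\{1,\lambda\}>0$ everywhere.

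Next I would take $W(\xi,x_c)$ from (\ref{eq:storage function closed-loop}), which is positive definite by hypothesis, as a Lyapunov function for the interconnection (\ref{eq:closed-loop}). The derivative computation is identical to the local case: substituting the global OSNI inequality for $H_p$, the NI dissipation inequality $\dot V_c(x_c)\le u_c^T\dot y_c$ for $H_c$, and the interconnection relations $w=y_c$, $u_c=y$, the cross terms $u_c^T\dot h_c(x_c)-\dot h_c(x_c)^T[\xi_1^T\ \xi_2^T]^T-h_c(x_c)^T[\dot\xi_1^T\ \dot\xi_2^T]^T$ cancel, leaving $\dot W(\xi,x_c)\le-\epsilon\|\dot y\|^2\le 0$. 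This step requires no change from the proof of Theorem \ref{theorem:stabilization}.

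Then comes the invariance argument. As before, $\dot W\equiv 0$ forces $\dot y\equiv 0$, hence $\xi_3\equiv 0$ and $\xi_1,\xi_2$ constant, so that $w_1=(\partial V_1/\partial\xi_1)^T$ and $w_2=(\partial V_2/\partial\xi_2)^T$ are constant and must coincide with the steady-state output $\kappa([\xi_1^T\ \xi_2^T]^T)$ of $H_c$. Augmenting $V_1,V_2$ by additional positive definite terms so that the gradient curve meets $\kappa$ only at the origin—which preserves positive definiteness of $W$—rules out any nonzero steady state and forces $\xi\to 0$. Since $\Sigma$ is globally strictly minimum phase, its internal dynamics $\dot z=f^*(z,\xi)$ are ISS, so invoking Lemma \ref{lemma:ISS x convergence} in place of Lemma \ref{lemma:LISS x convergence} yields $z\to 0$ globally, completing global asymptotic stability.

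The main obstacle is upgrading this informal invariance step to a genuinely global statement. Locally, boundedness of trajectories is automatic near the origin, but globally I must guarantee that every trajectory stays bounded so that the limit-set reasoning applies. This needs $W$ to be radially unbounded (arranged by choosing $V_1,V_2$ radially unbounded, e.g.\ quadratic), the global ISS estimate (\ref{eq:ISS ineq}) to keep $z$ bounded whenever $\xi$ is bounded, and a separate boundedness argument for the uncertainty state $x_c$ drawn from the monotone decrease of $W$. Making these three boundedness facts precise—not merely the asymptotic convergence—is where the global proof genuinely departs from its local version.
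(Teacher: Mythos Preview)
Your proposal is correct and mirrors the paper's own proof exactly: the paper simply says to repeat the proof of Theorem~\ref{theorem:stabilization}, substituting Lemma~\ref{lemma:global normal form} for Lemma~\ref{lemma:normal form conditions} and Lemma~\ref{lemma:ISS x convergence} for Lemma~\ref{lemma:LISS x convergence}. Your closing paragraph on radial unboundedness of $W$ and global boundedness of $z$ and $x_c$ is in fact more careful than the paper, which does not address these points explicitly.
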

\begin{proof}
See the proof of Theorem \ref{theorem:stabilization}, using Lemmas \ref{lemma:global normal form} and \ref{lemma:ISS x convergence} instead of Lemmas \ref{lemma:normal form conditions} and \ref{lemma:LISS x convergence}.	
\end{proof}

\section{EXAMPLE}\label{sec:example}
In this section, we illustrate the stabilization process for a system with nonlinear NI uncertainty. We show that if the conditions in Theorems \ref{theorem:stabilization} and \ref{theorem:global stabilization} are satisfied by choosing suitable state feedback, the uncertain system can be asymptotically stabilized.

As the conditions for the existence of normal forms provided in Lemma \ref{lemma:normal form conditions} and \ref{lemma:global normal form} are not the main focus of this work, we investigate an uncertain system whose nominal plant is already in its normal form. Consider an uncertain system as shown in the LHS of Fig.~\ref{fig:controller synthesis}. Suppose the nominal plant $\Sigma$ has the state-space model:
\begin{subequations}\label{eq:example nominal plant}
\begin{align}
\Sigma:\quad \dot z =& -z-z^3+\xi_1^2,\label{eq:example nominal plant a}\\
\dot \xi_1 =& \sin z + (u_1+w_1),\\
\dot \xi_2 = & \xi_3,\\
\dot \xi_3 =& \xi_1+\xi_2^2+\xi_3+(u_2+w_2),\\
y = & \left[\begin{matrix}\xi_1\\ \xi_2	
\end{matrix}
\right],	
\end{align}
\end{subequations}
where $x = [z\ \xi_1 \ \xi_2\ \xi_3]^T$, $u=[u_1\ u_2]^T$ and $y = [\xi_1\ \xi_2]^T$ are the state, nominal input and output of the system, respectively. $w=[w_1\  w_2]^T$ is the output of the plant uncertainty. The sum of the nominal input and the uncertainty output; i.e., $u+w$, acts as the actual input of the system (\ref{eq:example nominal plant}). Here $z,\xi_1,\xi_2,\xi_3,u_1,u_2,w_1,w_2\in \mathbb R$. This system is globally strictly minimum phase because the internal dynamics (\ref{eq:example nominal plant a}) are ISS with respect to the state $\xi$ (see for example \cite[Theorem 4.19]{khalil2002nonlinear}, using the Lyapunov function $V(z)=z^2$).

 Suppose the plant uncertainty is a nonlinear NI system and has the model:
\begin{subequations}\label{eq:example NI uncertainty}
\begin{align}
\dot x_{c1} =& -x_{c1}+u_{c1},\\
\dot x_{c2} =& -x_{c2}^3+u_{c2},\\
y_{c} =& \left[\begin{matrix}x_{c1}\\x_{c2}
\end{matrix}
\right],
\end{align}
\end{subequations}
where $x_c = [x_{c1}\ x_{c2}]^T$, $u_c = [u_{c1}\ u_{c2}]^T$ and $y_c = [x_{c1}\ x_{c2}]^T$ are the state, input and output of the system, respectively. Here, $x_{c1}, x_{c2}, u_{c1}, u_{c2}\in \mathbb R$.
The system (\ref{eq:example NI uncertainty}) is nonlinear NI with the positive definite storage function
\begin{equation*}
V_c(x_c) = \frac{1}{2}x_{c1}^2+\frac{1}{4}x_{c2}^4,
\end{equation*}
which satisfies the nonlinear NI property
\begin{equation*}
\dot V(x_c) \leq u_c^T \dot y_c.	
\end{equation*}
The interconnection between the nominal plant (\ref{eq:example nominal plant}) and the plant uncertainty, as shown in Fig.~\ref{fig:controller synthesis}, is
\begin{equation}\label{eq:example interconnection}
	u_c = y; \quad \textnormal{and} \quad w = y_c.
\end{equation}
We choose positive definite functions $V_1(\xi_1)$ and $V_2(\xi_2)$ to be
\begin{equation*}
V_1(\xi_1)= \xi_1^2\quad \textnormal{and} \quad V_2(\xi_2) = \xi_2^{\frac{4}{3}},	
\end{equation*}
which makes the storage function of the entire system, constructed using the formula (\ref{eq:storage function closed-loop}), positive definite. The storage function is
\begin{equation}\label{eq:example storage function W}
W(\xi,x_c) = \xi_1^2+\xi_2^{\frac{4}{3}} + \frac{1}{2}\xi_3^2+\frac{1}{2}x_{c1}^2+\frac{1}{4}x_{c2}^4-\xi_1x_{c1}-\xi_2x_{c2}.
\end{equation}
The corresponding state feedback control input, according to (\ref{eq:state feedback control}), is
\begin{equation}\label{eq:example state feedback}
u = -\left[\begin{matrix}\sin{z}+2\xi_1 \\ \xi_1+\xi_2^2+\frac{4}{3}\xi_2^{\frac{1}{3}}+2\xi_3 \end{matrix}\right].	
\end{equation}
We show in the following that this state feedback control law stabilizes the system. Under the state feedback (\ref{eq:example state feedback}), the nominal plant (\ref{eq:example nominal plant}) now becomes the nominal closed-loop system, as shown on the right-hand side of Fig.~\ref{fig:controller synthesis}. It has the following system model
\begin{subequations}
\begin{align}
\Sigma:\quad \dot z =& -z-z^3+\xi_1^2,\label{eq:example nominal plant new a}\\
\dot \xi_1 =& -2\xi_1 + w_1,\label{eq:example nominal plant new b}\\
\dot \xi_2 = & \xi_3,\label{eq:example nominal plant new c}\\
\dot \xi_3 =& -\frac{4}{3}\xi_2^{\frac{1}{3}}-\xi_3+w_2,\label{eq:example nominal plant new d}\\
y = & \left[\begin{matrix}\xi_1\\ \xi_2	
\end{matrix}
\right].\label{eq:example nominal plant new e}	
\end{align}
\end{subequations}
Using Lyapunov's direct method, the time derivative of the storage function (\ref{eq:example storage function W}) is
\begin{align}
	\dot W & (\xi,x_c)\notag\\
	=& -5\xi_1^2+6\xi_1x_{c1}-2x_{c1}^2-\xi_3^2-x_{c2}^6+2x_{c2}^3\xi_2-\xi_2^2\notag\\
	=& -\|\dot y\|^2-(\xi_1-x_{c1})^2-(\xi_2-x_{c2}^3)^2\notag\\
	\leq & 0.\label{eq:dot W example}
\end{align}
It can observed from (\ref{eq:dot W example}) that $\dot W(\xi,x_c) \equiv 0$ only if $\dot y=0$, $\xi_1=x_{c1}$ and $\xi_2 = x_{c2}^3$. This implies that $2\xi_1 = w_1=x_{c1}=\xi_1=0$ and $\frac{4}{3}\xi_2^{\frac{1}{3}}=w_2=x_{c2}=\xi_2^{\frac{1}{3}}=0$, where (\ref{eq:example interconnection}) and (\ref{eq:example nominal plant new b})-(\ref{eq:example nominal plant new e}) are also used. This implies that $W(\xi,x_c)$ will keep decreasing until $\xi=0$ and $x_c=0$. According to Lemma \ref{lemma:ISS x convergence}, $z$ will also converge to zero. We also simulate this uncertain system under the state feedback (\ref{eq:example state feedback}). Let the initial state of the nominal plant (\ref{eq:example nominal plant}) be $x(0) = [z(0)\ \xi_1(0) \ \xi_2(0) \ \xi_3(0)]^T = [10 \ 3 \ -5 \ 7]$ and the initial state of the plant uncertainty (\ref{eq:example NI uncertainty}) be $x_c(0) = [x_{c1}(0)\ x_{c2}(0)]^T = [-8 \ 2]$. Fig.~\ref{fig:simulation} shows the state trajectories of the nominal closed-loop system $H_p$; i.e., the nominal plant (\ref{eq:example nominal plant}) under the state feedback (\ref{eq:example state feedback}). Despite the presence of the nonlinear NI uncertainty $H_c$ as described by (\ref{eq:example NI uncertainty}), the plant states still converge to zero.

\begin{figure}[h!]
\centering
\psfrag{State}{State}
\psfrag{time (s)}{Time (s)}
\psfrag{States of the Nominal Plant of the Uncertain System}{\hspace{0.3cm}State Trajectories of the Nominal Plant}
\psfrag{z}{\scriptsize$z$}
\psfrag{x1}{\scriptsize$\xi_1$}
\psfrag{x2}{\scriptsize$\xi_2$}
\psfrag{x3}{\scriptsize$\xi_3$}
\psfrag{0}{\small$0$}
\psfrag{2}{\small$2$}
\psfrag{4}{\small$4$}
\psfrag{6}{\small$6$}
\psfrag{8}{\small$8$}
\psfrag{10}{\small$10$}
\psfrag{15}{\small$15$}
\psfrag{20}{\small$20$}
\psfrag{-5}{\hspace{-0.162cm}\small$-5$}
\psfrag{5}{\small$5$}
\psfrag{10}{\small$10$}
\includegraphics[width=9cm]{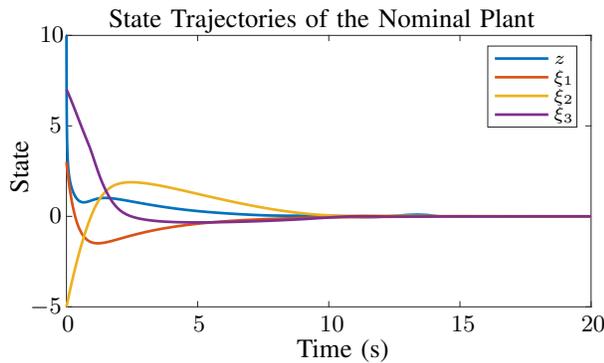}
\caption{State trajectories of the uncertain system (\ref{eq:example nominal plant}) under the state feedback control (\ref{eq:example state feedback}) constructed according to Theorem \ref{theorem:global stabilization}. Starting from nonzero initial values, the states of the nominal closed-loop system converge to zero, despite the presence of a nonlinear NI plant uncertainty (\ref{eq:example NI uncertainty}).}
\label{fig:simulation}
\end{figure}

\section{CONCLUSION}\label{sec:conclusion}
This paper investigates a state feedback stabilization problem using nonlinear NI systems theory for affine nonlinear systems with relative degree less than or equal to two. For such a system that also has a normal form, we provide a state feedback control law that makes it nonlinear NI or OSNI. In this case, if the internal dynamics of the system are ISS, then there exists state feedback control that stabilizes the system. In the case that the system has a nonlinear NI uncertainty, there exists a state feedback control law such that stabilizes the system if a positive definiteness-like assumption is satisfied for the storage function of the closed-loop system. A numerical example is provided to illustrate the process of stabilizing a system with a nonlinear NI uncertainty. Simulation shows that stabilization is achieved, as expected according to the proposed results.
%\input{counter_example.tex}

%\addtolength{\textheight}{-12cm}
%\input{Acknowledgment.tex}
\bibliographystyle{IEEEtran}
% Generated by IEEEtran.bst, version: 1.14 (2015/08/26)

\end{document}